\newtheorem{theorem}{Theorem}[section]
\newtheorem{lemma}[theorem]{Lemma}
\begin{document}

\title{Complexity of Acyclic Colorings of Graphs and Digraphs with Degree and Girth Constraints}

\author{
Tom\'{a}s Feder \\
   268 Waverley St., Palo Alto, CA 94301, USA \\
   {\tt tomas@theory.stanford.edu}\\
{\tt http://theory.stanford.edu/${}_{\bf\tilde{}}{\hspace{.02in}}$tomas},
\\
and \\
Pavol Hell \\
   School of Computing Science \\
   Simon Fraser University \\
   Burnaby, B.C., Canada V5A 1S6 \\
   {\tt pavol@cs.sfu.ca}
\\
and \\
Carlos Subi \\
   {\tt carlos.subi@hotmail.com}
}

\date{}

%%%%%%%%%%%%%%%%%%%%%%%%%%%%%%%%%%%%%%%%%%%%%%%%%%%%%%%%
% Author's definitions

\newcommand{\DEF}[1]{{\em #1\/}}

\newcommand\chic{\chi_c}
\newcommand\C{\hbox{${\cal C}$}}
\newcommand{\RR}{\mbox{$\mathbb R$}}
\newcommand{\NN}{\mbox{$\mathbb N$}}
\newcommand{\ZZ}{\mbox{$\mathbb Z$}}
\newcommand{\eopf}{\raisebox{0.8ex}{\framebox{}}}
\newcommand{\dist}{\hbox{\rm d}}
\renewcommand\a{\alpha}
\renewcommand\b{\beta}
\renewcommand\c{\gamma}
\renewcommand\d{\delta}
\newcommand\D{\Delta}
\newcommand{\directedchi}{\mbox{$\vec{\chi}$}}
\newcommand{\directedE}{\mbox{$\vec{E}$}}
\newcommand{\directedG}{\mbox{$\vec{G}$}}
\newcommand{\directedK}{\mbox{$\vec{K}$}}

\newenvironment{proof}%
{\noindent{\bf Proof.}\ }%
{\hfill\eopf\par\bigskip}%

%%%%%%%%%%%%%%%%%%%%%%%%%%%%%%%%%%%%%%%%%%%%%%%%%%%%%%%%

\maketitle

\begin{abstract}
We consider acyclic $r$-colorings in graphs and digraphs: they color the
vertices in $r$ colors, each of which induces an acyclic graph or digraph. (This
includes the dichromatic number of a digraph, and the arboricity of a graph.)
For any girth and sufficiently high degree, we prove the NP-completeness of
acyclic $r$-colorings; our method also implies the known analogue for classical
colorings. The proofs use high girth graphs with high arboricity and
dichromatic numbers. High girth graphs and digraphs with high chromatic and
dichromatic numbers have been well studied; we re-derive the results from a
general result about relational systems, which also implies the similar fact
about high girth and high arboricity used in the proofs. These facts concern
graphs and digraphs of high girth and low degree; we contrast them with 
acyclic colorings of tournaments (which have low girth and high degree). 
We prove that even though acyclic two-colorability of tournaments is known 
to be NP-complete, random acyclically $r$-colorable tournaments allow
recovering an acyclic $r$-coloring in deterministic linear time, with high probability.
\end{abstract}

\section{Introduction}
Let $G$ be either a graph or a digraph. An {\em acyclic $r$-coloring} of $G$ is an assignment of 
$r$ colors to the vertices in $G$, so that the vertices of each color $i$ induce an acyclic subgraph 
$G_i$ of $G$. Note that the vertex sets $V(G_i)$ partition $V(G)$.  An equivalent condition on the 
$r$-coloring is that no cycle $C$ in $G$ is monochromatic. (In the case of digraphs $C$ is a directed 
cycle.) The least $r\geq 1$ such that $G$ admits an acyclic $r$-coloring is called the {\em arborocity} 
of $G$ if $G$ is a graph~\cite{arb}, and is called the {\em dichromatic number} of $G$ if $G$ is a 
digraph~\cite{b}. In both cases, the problem of deciding if $G$ has an acyclic $r$-coloring is NP-complete, 
for any fixed $r\geq 2$ \cite{hardness}. In this paper we study the effect of restrictions on the girth
and the degrees in $G$. For digraphs, we define the {\em directed girth} to be the minimum length of a
directed cycle, if one exists, and leave it undefined otherwise; and we take the degree of a vertex to be 
the sum of its in-degree and out-degree.

We focus on a combination of girth and degree constraints, and we look at the two opposite ends of the 
spectrum: small girth with high degree on the one hand, and large girth with small degree on the other.
For the former problem, graphs of high degree and small girth are typified by complete graphs and the 
arboricity of complete graphs is trivial, but digraphs of high degree and small directed girth are typified 
by tournaments, and the dichromatic number of tournaments is already a hard problem: even just deciding 
acyclic two-colorability of a tournament is known to be NP-complete \cite{bor2} (see also \cite{bor}). We 
prove that for random acyclically $r$-colorable tournaments $T$ we can recover the unknown acyclic 
$r$-coloring in deterministic linear time, with high probability over the choices of $T$. (Such a coloring 
is unique with high probability, as long as $r$ is a constant.) This underscores the fact that the 
NP-completeness does not come from random instances. We placed this discussion in the last 
section, as it is quite technical.

For the latter problem, we consider graphs and digraphs of low degree and high girth. It is known 
that in the context of classical graph colorings, for each $r$ and $k$ there exists a $d$ such that 
deciding $r$-colorability of graphs with girth at least $k$ and all degrees at most $d$ is NP-complete 
\cite{hougardy}). We offer a simple proof of this fact to illustrate our techniques, and then prove an 
analogous result for acyclic colorings. For both graphs and digraphs, we consider the special cases 
of $r=2$ separately, as we can offer simpler proofs and/or better bounds. In any case, even at this 
opposite end of the scale, the arboricity and the dichromatic number remain mostly NP-complete.

Our NP-completeness constructions depend on gadgets constructed from graphs and digraphs with
high girth and high arboricity or dichromatic number. There are well known constructions of graphs
and digraphs with high girth and high chromatic and dichromatic numbers \cite{erdos,bor2}. As far as we were
able to determine, there does not appear to be such a result for arboricity, so we provide a short proof.
The proof depends on a general result for high girth relational systems from \cite{fv}; in fact the same
result implies the corresponding results for the chromatic and dichromatic numbers as well.

To highlight the gap for digraphs between the largest acyclic induced subgraph and the largest acyclic 
induced subgraph that can be found in polynomial time, we prove that even for digraphs (without digons) 
that have an acyclic $n^{\epsilon}$-coloring, and hence must have acyclic subgraphs of size $n^{1-\epsilon}$, 
it is hard to find one of size greater than $n^{1/2 + \epsilon}$ (Theorem \ref{2.5}).

\section{Graphs and Digraphs with High Girth and Low Degrees}

The following prototype result for ordinary $r$-coloring of undirected graphs is known \cite{hougardy}. 
We offer an easy proof to illustrate our technique.

\begin{theorem}
There exists a function $d=d(r,k)$ such that given $r,k\geq 3$,
the $r$-colorability of a graph $G$ of girth at least $k$, and of 
maximum degree at most $d=d(r,k)$, is NP-complete. 
\end{theorem}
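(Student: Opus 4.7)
The plan is to reduce (unrestricted) $r$-COLORING, which is NP-complete for $r\geq 3$, to its restriction to graphs of girth $\geq k$ and maximum degree $\leq d(r,k)$. Given an arbitrary instance $H$, I will build an equivalent $G$ by replacing each edge of $H$ by a high-girth, bounded-degree \emph{difference gadget} $W_{\neq}$: a graph with two distinguished vertices $s,t$ such that in every proper $r$-coloring of $W_{\neq}$ the colors of $s$ and $t$ differ, and conversely any assignment of distinct colors to $s,t$ extends to a proper $r$-coloring of $W_{\neq}$.

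To construct $W_{\neq}$, invoke the Erd\H{o}s theorem to obtain a specific graph $F$ of girth $\geq k$ and $\chi(F)\geq r+1$; then pass to an $(r+1)$-edge-critical subgraph $F^*$, which has the same girth and a max degree $d_0=d_0(r,k)$ depending only on the chosen $F$. Pick any edge $e=uv$ of $F^*$. By criticality, $F^*-e$ is $r$-colorable, and in every such coloring $u$ and $v$ receive the same color (otherwise the coloring would extend to $F^*$, contradicting $\chi(F^*)=r+1$). So $F^*-e$ with distinguished pair $(u,v)$ is an \emph{equality gadget} $W_=$. Attaching a fresh pendant edge $uw$ and declaring $(w,v)$ the distinguished pair gives $W_{\neq}$: any $r$-coloring forces $u=v$ in color and $w\neq u$, hence $w\neq v$; and any distinct colors assigned to $w$ and $v$ can plainly be extended.

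The reduction replaces each edge of $H$ by a fresh copy of $W_{\neq}$, identifying $(w,v)$ with the edge's endpoints. To bound the maximum degree uniformly in $H$, I first split each vertex $x\in V(H)$ of degree larger than a constant into several low-degree replicas, forcing them all to share a color by joining them with copies of the equality gadget $W_=$ arranged in a tree pattern. Each replica is then an endpoint of only a bounded number of gadgets, and the resulting $G$ has max degree bounded by a function $d(r,k)$ while being $r$-colorable iff $H$ is.

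The main obstacle is verifying that $G$ has girth $\geq k$. Cycles lying inside a single gadget have length $\geq k$ by construction. Any cycle that crosses multiple gadgets decomposes into paths internal to gadgets, each running between two distinguished endpoints; such an internal $s$--$t$ path has length $\geq k-1$, because reinserting the removed edge $e$ would produce a cycle of $F^*$ of length $\geq k$. Hence any such multi-gadget cycle has length at least $2(k-1)\geq k$ for $k\geq 3$. A little care is still needed at the pendant edges $uw$ and inside the vertex-splitting trees to rule out short shortcuts, but no new idea enters there.
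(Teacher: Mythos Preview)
Your argument is correct and follows essentially the same route as the paper: reduce from unrestricted $r$-coloring, split each vertex of the input into a bounded-degree tree of replicas, and glue the replicas together with an equality gadget $W_=$ obtained by deleting an edge from an $(r{+}1)$-edge-critical graph of girth $\geq k$. The only difference is cosmetic: the paper keeps each original edge of $H$ as a single edge between the two corresponding tree leaves (a single edge already forces distinct colors), whereas you replace it by an explicit difference gadget $W_{\neq}=W_=+\text{pendant}$. Your version works just as well; the paper's is a touch simpler since the plain edge already does the job.
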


\begin{proof}
We shall reduce from the problem of $r$-colorability. Given a graph $G$, we shall
construct a graph $G'$ with maximum degree at most $d$ and girth at least $k$
that is $r$-colorable if and only if $G$ is $r$-colorable. The first step is to replace
each vertex $x$ with a binary tree $T_x$ having $\deg(x)$ leaves, and place each 
edge $xy$ of $G$ between the $y$-th leaf of $T_x$ and the $x$-th leaf of $T_y$.
The resulting graph $G^*$ has maximum degree three. In the second step we shall
replace each edge of every tree $T_x$ by a gadget $J$ designed to ensure that the girth
of the resulting graph $G'$ is at least $k$ and that all vertices that were leaves of any
one $T_x$ obtain the same colour in any $r$-colouring of $G'$. This ensures that $G'$
is $r$-colourable if and only if $G$ is $r$-colourable. The maximum degree of $G'$ is
then three times the maximum degree $\Delta(J)$ of the gadget $J$, so $d(r,k) = 3 \Delta(J)$.
It remains to construct $J$: it is well known that for any $k$ and $r$ there exists a graph $K$
which is not $r$-colorable and has girth at least $k$ \cite{erdos} (for a constructive proof
see \cite{something}). We may assume that $K$ contains an edge $uv$ such that $K - uv$
is $r$-colorable. We let $J$ to be the graph $K - uv$, and replace each edge $st$ of every 
$T_x$ by a copy of $J$, identifying $s$ with $u$ and $t$ with $v$. Since the girth of $K$ was 
at least $k$, each path joining $u$ and $v$ in $J$ has at least $k$ vertices, and the girth of
the entire graph $G'$ is at least $k$. Any $r$-colouring of $G'$ assigned the same color to
$u$ and $v$ in each copy of $J$, since otherwise $K$ would have been $r$-colorable.
\end{proof} 

The same technique can be applied to acyclic coloring problems. We start with discussing
the computational complexity of arboricity of graphs of high girth and low degree, and prove 
that it remains NP-complete. We begin with the special case of $r=2$.
 
\begin{theorem}\label{3.2}
There is a function $d(k)$ such that given $k\geq 3$,
the problem of acyclic two-coloring a graph $G$ of girth at least
$k$, and of maximum degree at most $d(k)$, is NP-complete. 
\end{theorem}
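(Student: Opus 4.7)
The plan is to imitate the template of Theorem~2.1, reducing from the NP-complete problem of acyclic 2-colorability of arbitrary graphs. Given an input $G$, I would first replace each vertex $x$ by a binary tree $T_x$ with one leaf per neighbor of $x$, routing each edge $xy$ of $G$ between the corresponding leaves, to produce an intermediate graph $G^*$ of maximum degree three. Then I would replace each edge of every $T_x$ by a copy of a gadget $J$ whose purpose is to force its two terminals to share a color in any acyclic 2-coloring; chaining these equalities along $T_x$ will force every $T_x$ to be monochromatic, and its common color will serve as $\phi(x)$.

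For the gadget, take $J = K - uv$, where $K$ is a graph of girth at least $k$ that is edge-critical for arboricity three: $K$ itself admits no acyclic 2-coloring, but $K - e$ does for every edge $e$. Any acyclic 2-coloring $\psi$ of $J$ then satisfies $\psi(u) = \psi(v)$, for otherwise reinstating the edge $uv$ would not create a monochromatic cycle and $K$ would be acyclically 2-colorable; the same criticality ensures that every acyclic 2-coloring of $J$ contains a monochromatic $u$-$v$ path, since that path, together with $uv$, is what is blocked when we pass from $J$ to $K$. Such a $K$ exists once we have any graph of girth at least $k$ and arboricity at least three, which the authors establish separately from a general high-girth relational-systems result: delete edges greedily while preserving arboricity three, noting that girth is not decreased by edge deletions.

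Form $G'$ by replacing every tree edge of $G^*$ with a fresh copy of $J$. Its maximum degree is at most $3\Delta(J)$, a function of $k$ only; its girth is at least $k$, since any cycle either lies inside a single copy of $J$, which has girth at least $k$, or crosses some gadget via a $u$-$v$ path of length at least $k-1$. For correctness, an acyclic 2-coloring $\phi$ of $G$ lifts to $G'$ by coloring all of $T_x$ with $\phi(x)$ and choosing, inside each gadget, an acyclic 2-coloring whose terminal color matches $\phi(x)$ (available by symmetry between the two colors); a monochromatic cycle in the resulting $G'$ would, after contracting each $T_x$ to $x$, project to a monochromatic closed walk in $G$, forcing a monochromatic cycle there and contradicting acyclicity of $\phi$. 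Conversely, given any acyclic 2-coloring of $G'$, the terminal-equality property of $J$ makes each $T_x$ monochromatic, and a monochromatic cycle in the induced coloring of $G$ would lift back to one in $G'$ by stitching together the guaranteed monochromatic $u$-$v$ paths through the gadgets replacing the relevant tree edges, again a contradiction.

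The principal obstacle, and the new ingredient relative to Theorem~2.1, is the existence of the gadget graph $K$: a high-girth graph of arboricity at least three is not as classical as its chromatic analogue and must be supplied by the authors' general relational-systems construction. After that input is available, the two remaining delicacies — verifying the girth bound through many glued gadgets and checking both directions of the reduction — both reduce to the projection-onto-$G$ argument for monochromatic cycles sketched above.
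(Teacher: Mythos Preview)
Your argument is correct, but it is not the route the paper takes for this particular theorem. The paper reduces instead from not-all-equal $k$-satisfiability with three occurrences per variable: each clause is replaced by a bare $k$-cycle (so that ``not all equal'' is literally ``this cycle is not monochromatic''), each variable gets a claw $K_{1,3}$ whose three leaves are its three occurrences, and only the three claw edges are replaced by the equality gadget $J=K-uv$. Your construction is closer in spirit to the paper's proof of the general $r\ge 3$ case (Theorem~\ref{3.1}): you reduce from acyclic $2$-colorability itself, blow vertices up into binary trees, and keep the original edges of $G$ intact rather than introducing a second gadget.

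What each approach buys: the paper's NAE-$k$-SAT reduction makes the correctness argument essentially trivial, since the clause $k$-cycles already encode exactly the ``no monochromatic cycle'' constraint and there is nothing to project. Your approach is more self-contained (you never leave the world of acyclic $2$-coloring) but pays for it with the extra observation that every acyclic $2$-coloring of $J$ contains a monochromatic $u$--$v$ path, and with the projection/lifting of cycles between $G$ and $G'$. One small point you glossed over in the forward direction: a monochromatic cycle of $G'$ lying entirely inside a single $T_x$-region does not project to a closed walk in $G$ at all; you need the (easy) remark that the gadgets of one $T_x$ meet only at cut vertices, so any such cycle sits inside a single gadget and is therefore ruled out by its acyclic coloring. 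With that added, your proof is complete.
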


\begin{proof}
Fix $k\geq 3$. In this case we reduce from the not-all-equal $k$-satisfiability 
problem with three occurences of each variable, which is NP-complete by 
Feder and Ford~\cite{ff}. An instance of this problem is a set of variables with
binary values, and a set of clauses each consisting of $k$ variables. The
question is if values can be chosen so that no clause has all variables of the
same value. (This is also known as the two-colorability problem for $k$-uniform
hypergraphs \cite{hardness}.) Given such an instance, we replace each clause 
by a disjoint cycle of $k$ vertices, one corresponding to each variable. Clearly, 
in every acyclic two-colouring each of these cycles will receive both colours.
We need to ensure that all three occurences of a variable are given the same
value. We shall add for every variable $x$ a simple claw $T_x=K_{1,3}$ with 
the three leaves identified with the three occurences of $x$. The resulting 
graph $G'$ has maximum degree three. For this construction we shall similarly 
use a gadget $J$ to replace each edge of every claw $T_x$; the gadget $J$ will 
ensure that the final graph $G$ has girth at least $k$ and have the same color on 
the three occurences of each variable, in every acyclic two-colouring of $G$.
This guarantees that $G$ is acyclically two-colorable if and only if the original
instance is satisfiable. We construct $J$ from a graph $K$ that has girth at least $k$
that does not admit an acyclic $r$-coloring, but contains an edge $uv$ such that 
$J = K - uv$ is acyclically $r$-colorable. We then replace each edge of each $T_x$
by a copy of $J$, identifying $u$ with $x$ and $v$ with $y$. The construction of 
such a graph $K$ is discussed in the next section.
\end{proof}

The general result is the following.

\begin{theorem}\label{3.1}
There exists a function $d=d(r,k)$ such that given $r,k\geq 3$,
the problem of acyclic $r$-coloring a graph $G$ of girth at least  
$k$, and maximum degree at most $d=d(r,k)$, is NP-complete. 
\end{theorem}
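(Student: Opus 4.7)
My plan is to extend the template used for Theorem~\ref{3.2} to arbitrary $r \geq 3$. I would reduce from a bounded-occurrence NP-complete problem whose natural encoding is a collection of ``not-all-equal'' constraints over $r$-valued variables: the $r$-color analog of bounded-occurrence NAE $k$-satisfiability. For $r = 2$ this is the Feder--Ford problem of~\cite{ff}; for $r \geq 3$ one can obtain an analogous bounded-occurrence version starting from an NP-complete $r$-NAE-$k$ problem and using a standard tree-of-equality replacement to bound occurrences. An instance is a $k$-uniform hypergraph $H$ together with a constant bound on the number of times any variable appears, and the question is whether $H$ admits an $r$-coloring in which no hyperedge is monochromatic.

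Given such $H$, I would build $G'$ in two steps. First, I replace each hyperedge by a disjoint $k$-cycle whose vertices correspond to the $k$ variable occurrences in the hyperedge; in any acyclic $r$-coloring, such a cycle cannot be monochromatic, which is exactly the NAE constraint. Second, for each variable $x$ I attach a bounded-internal-degree tree $T_x$ whose leaves are identified with the occurrences of $x$, and I replace each edge of $T_x$ by a copy of a gadget $J$. Here $J = K - uv$, where $K$ is a graph of girth at least $k$ with arboricity greater than $r$ that contains an edge $uv$ whose deletion leaves an acyclically $r$-colorable graph; the existence of such $K$ is the content of the next section. As in Theorem~\ref{3.2}, the endpoints $u$ and $v$ must receive the same color in every acyclic $r$-coloring of $J$, because otherwise the coloring would extend to $K$ itself, contradicting the choice of $K$. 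Consequently all leaves of $T_x$ share one color, which encodes a consistent value for the variable.

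The two enforcement mechanisms together make acyclic $r$-colorability of $G'$ equivalent to NAE $r$-colorability of $H$. The girth of $G'$ is at least $k$ since each clause cycle has length exactly $k$, each gadget has girth at least $k$, and any cycle passing through a gadget must traverse at least one internal $u$--$v$ path in $J$, whose length is at least $k-1$. The maximum degree of $G'$ is a constant multiple of $\Delta(J)$, yielding the required function $d(r,k)$. The hard part will be the construction of the graph $K$ with all three required properties simultaneously: girth at least $k$, failure of acyclic $r$-colorability, and a single edge whose removal restores acyclic $r$-colorability. I would obtain such $K$ by taking a high-girth graph of arboricity greater than $r$ produced by the relational-systems machinery of~\cite{fv} discussed in Section~2, passing to an edge-minimal subgraph that is still not acyclically $r$-colorable, and letting $uv$ be any edge of that critical subgraph.
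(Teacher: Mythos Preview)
Your plan follows the template of Theorem~\ref{3.2} rather than the paper's own argument for Theorem~\ref{3.1}, and in doing so it leaves the backward implication (``$H$ satisfiable $\Rightarrow$ $G'$ acyclically $r$-colorable'') unjustified --- and in fact this implication can fail. The difficulty is that $J=K-uv$ enforces more than equality of colors: in \emph{every} acyclic $r$-coloring of $J$ the vertices $u,v$ not only share a color $c$ but are also joined by a color-$c$ path inside $J$, since otherwise re-inserting $uv$ would yield an acyclic $r$-coloring of $K$. Thus, once the variable colors are fixed, all occurrences of each variable $x$ lie in one connected piece of the color-$c(x)$ subgraph of $G'$. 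Now whenever two same-colored occurrences are adjacent on a clause $k$-cycle, that clause edge links two such pieces, and several such links can close a monochromatic cycle. For $r=2,\,k=3$ this already breaks the equivalence: with clauses $(a,b,d),(b,c,d),(c,a,d)$ one checks that \emph{every} NAE $2$-coloring forces a monochromatic cycle in $G'$, so $G'$ is not acyclically $2$-colorable even though $H$ is satisfiable. For $r\ge 3$ you would at minimum need to argue that some NAE $r$-coloring of $H$ avoids all such cycles, and nothing in your outline does this.

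The paper's proof of Theorem~\ref{3.1} takes a different route that sidesteps the issue entirely. It reduces from ordinary graph $r$-colorability and uses \emph{two} gadgets: the equality gadget $J_1=K-uv$ on the tree edges, and an inequality gadget $J_2$ on each edge $xy$ of the input graph, obtained by subdividing $uv$ in $K$ with a new vertex $w$ and taking $u_2=u$, $v_2=w$. Since $v_2=w$ receives a color different from $u$, the unavoidable monochromatic $u$--$v$ path inside $J_2$ is separated from the tree $T_y$ by the off-color vertex $w$; hence the color-$c$ pieces attached to different trees remain pairwise disconnected and every color class is a forest. Replacing your clause cycles by this inequality gadget $J_2$ is exactly the missing ingredient.
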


\begin{proof}
Here we combine both of the previous tricks. We again reduce from 
the problem of graph $r$-colorability. Thus let $G$ be an instance. We again replace 
each vertex $x$ by a binary tree $T_x$ with $deg(v)$ leaves. We will use two 
gadgets, $J_1$ and $J_2$ with the following properties:

\begin{itemize}
\item
$J_1$ has vertices $u_1, v_1$ such that any acyclic $r$-coloring of $J_1$ assigns 
the same color to $u_1$ and $v_1$;
\item
$J_2$ has vertices $u_2, v_2$ such that any acyclic $r$-coloring of $J_2$ assigns 
different colors to $u_2$ and $v_2$; 
\item
$J_1$ has girth at least $k$ and contains no path with fewer than $k$ vertices
between $u_1$ and $v_1$; and
\item
$J_2$ has girth at least $k$.
\end{itemize}

Each edge $st$ of each $T_x$ will be replaced by $J_1$ identifying $s$ with $u_1$ and 
$t$ with $v_1$, and each edge $xy$ of $G$ will similarly be replaced by a copy of $J_2$ 
between the corresponding leaves of $T_x$ and $T_y$. The resulting graph $G'$ has girth 
at least $k$ because every cycle in $G'$ is either inside a copy of some $J_i$, or passes 
through some $J_1$. Clearly, $G'$ is acyclically $r$-colorable if and only if $G$ is $r$-colorable 
in the usual sense. The degrees of $G'$ are maximized by three times the maximum degree of 
any $u_1, u_2, v_1, v_2$ in $J_1, J_2$.

It remains to explain how to construct $J_1, J_2$. Let $K$ be a graph of girth $k$
that is not acyclically $r$-colorable but has an edge $uv$ such that $K - uv$ is
acyclically $r$-colorable. (These graphs are constructed in the next section.) Let
$J_1 = K - uv$ and let $J_2$ be obtained from $K$ by subdividing the edge $uv$ by
a new vertex $w$. We also take $u_1=u, v_1=v$ and $u_2=u, v_2=w$. Then it is
easy to verify that $J_1, J_2$ satisfy the required properties. Indeed, in any acyclic
$r$-coloring of $J_1$, the vertices $u_1=u, v_1=v$ must obtain the same colour, 
otherwise $J_1 \cup uv = K$ would also be acyclically $r$-colorable. The same 
argument holds for $J_2$ and $u=u_2$ and $v$, and therefore $u_2=u$ and $v_2=w$
must obtain different colors. Any path between $u_1$ and $v_1$ in $J_1$ contains
at least $k$ vertices, otherwise $K$ would contain a cycle shorter than $k$.
\end{proof}

We are ready to tackle the desired result for the dichromatic number.

\begin{theorem}\label{2.2}
There exists a function $d=d(r,k)$ such that given $r,k\geq 3$,
the problem of acyclically $r$-coloring a digraph $G$ of directed girth at least  
$k$, and of in-degrees and out-degrees at most $d=d(r,k)$, is NP-complete. 
\end{theorem}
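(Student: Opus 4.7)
My plan is to adapt the reduction of Theorem \ref{3.1} to the digraph setting, reducing from the acyclic $r$-coloring problem for arbitrary digraphs (known NP-complete for each $r \geq 2$ \cite{hardness}). Given a digraph $G$, I would construct a digraph $G'$ of directed girth at least $k$ and bounded in- and out-degrees that is acyclically $r$-colorable if and only if $G$ is. Since arcs of $G$ impose no ``different colors'' requirement on their endpoints (unlike the edges in Theorem \ref{3.1}), only a same-color gadget $J_1$ is needed and no analog of $J_2$ is required.

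I would first construct $J_1$ exactly as in Theorem \ref{3.1}, but now for digraphs: take a digraph $K$ of directed girth at least $k$ that is not acyclically $r$-colorable but has an arc $uv$ such that $K - uv$ is, and set $J_1 = K - uv$, $u_1 = u$, $v_1 = v$. The usual argument gives that $u_1, v_1$ must receive the same color in any acyclic $r$-coloring of $J_1$; and because $uv$ must lie on some directed cycle of $K$ (else its removal would not affect colorability), which by the girth of $K$ has length at least $k$, $J_1$ also contains a directed path of length at least $k - 1$ between $u_1$ and $v_1$. I would then replace each vertex $x$ of $G$ by a chain $T_x$ of $J_1$ copies with $d_x = \deg^-(x) + \deg^+(x)$ interface vertices $\ell_1, \ldots, \ell_{d_x}$, declaring the first $\deg^-(x)$ of them ``in-leaves'' (one per in-arc of $x$) and the rest ``out-leaves''; each consecutive pair $\ell_i, \ell_{i+1}$ would be linked by a fresh copy of $J_1$ oriented so that $T_x$ contains a directed path of length at least $k - 1$ from $\ell_i$ to $\ell_{i+1}$. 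Iterating the same-color property of $J_1$ then forces all interface vertices of $T_x$ to share one color in any acyclic coloring of $T_x$. Finally, for each arc $xy$ of $G$ I would add in $G'$ a single arc from the out-leaf of $T_x$ corresponding to $y$ to the in-leaf of $T_y$ corresponding to $x$.

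The remaining verification has three parts. Bounded in- and out-degrees would follow because every vertex of $G'$ either lies inside some $J_1$ copy or is an interface vertex belonging to at most two chain links plus at most one arc replacement, and so has in- and out-degrees bounded by a constant depending only on $\Delta(K)$. Directed girth at least $k$ would follow because any directed cycle either lies inside a single $J_1$ copy (length $\geq k$ by $K$'s girth) or spans several chains via arc replacements; in the latter case each chain traversal uses a directed path of length at least $k - 1$ from an in-leaf to an out-leaf (passing forward through at least one $J_1$ copy), plus the arc replacement that follows it, contributing at least $k$ per chain. Correctness would follow by the standard two-sided argument: an acyclic $r$-coloring of $G$ extends to $G'$ by coloring each chain's interface vertices with the color of the corresponding $G$-vertex and filling in each $J_1$ copy acyclically, while any acyclic coloring of $G'$ makes each $T_x$ monochromatic on its interface vertices and so induces a coloring of $G$ whose monochromatic cycles would lift to monochromatic cycles in $G'$. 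The one step I would defer as harder is the construction of the underlying digraph $K$ with the stated girth and critical-colorability properties, which the paper supplies in the next section.
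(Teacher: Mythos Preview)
Your approach is correct but takes a genuinely different route from the paper's. The paper reduces from ordinary (undirected) graph $r$-colorability and therefore needs \emph{two} digraph gadgets: $J_1$ to propagate equality along each $T_x$, and a second gadget $J_2$ (obtained from $K$ by subdividing the arc $uv$) to enforce that the two ends of every edge of $G$ receive \emph{different} colors. You instead reduce directly from acyclic $r$-coloring of digraphs, replace each arc of $G$ by a single arc of $G'$, and thus need only $J_1$; this is more economical, but it makes the backward direction of your equivalence less immediate. To lift a monochromatic directed cycle of $G$ (under the induced coloring) to one of $G'$, you need, within every $J_1$ copy traversed, a monochromatic $v_1$-to-$u_1$ directed path of the common color $i$ of $u_1,v_1$. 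This does hold---restoring the arc $uv$ turns any acyclic $r$-coloring of $J_1$ into an $r$-coloring of $K$, which by choice of $K$ must contain a monochromatic cycle, necessarily through $uv$, yielding the required color-$i$ path in $J_1$---but you should state it rather than leave the lift as an assertion. Similarly, your girth argument requires that \emph{every} directed $v_1$-to-$u_1$ path in $J_1$ has length at least $k-1$, not merely that one such path exists; this follows since any such path closes up with $uv$ into a directed cycle of $K$. With these two observations made explicit your argument goes through, and your careful placement of all in-leaves before all out-leaves in the chain is exactly what forces every cross-chain traversal to use at least one long $v_1$-to-$u_1$ segment.
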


\begin{proof}
The proof is similar to the undirected case above. We again reduce from graph $r$-colorability. 
Let $G$ be an instance, and replace each vertex $x$ by an oriented binary tree $T_x$ with 
$deg(v)$ leaves. The tree is first rooted at a non-leaf vertex, then oriented away from the root.
We will use two digraph gadgets, $J_1$ with the following properties:

\begin{itemize}
\item
$J_1$ has vertices $u_1, v_1$ such that any acyclic $r$-coloring of $J_1$ assigns 
the same color to $u_1$ and $v_1$;
\item
$J_2$ has vertices $u_2, v_2$ such that any acyclic $r$-coloring of $J_2$ assigns 
different colors to $u_2$ and $v_2$;
\item
$J_1$ has directed girth at least $k$ and contains no directed path with fewer than $k$ vertices
from $u_1$ to $v_1$; and
\item
$J_2$ has directed girth at least $k$.
\end{itemize}

Each directed edge $st$ of each $T_x$ will be replaced by $J_1$ identifying $s$ with $u_1$ 
and $t$ with $v_1$, and each directed edge $xy$ of $G$ will similarly be replaced by a copy of 
$J_2$ between the corresponding leaves of $T_x$ and $T_y$. The resulting graph $G'$ has directed girth 
at least $k$ because every directed cycle in $G'$ is either inside a copy of some $J_i$, or passes 
through some $J_1$. Clearly, $G'$ is acyclically $r$-colorable if and only if $G$ is $r$-colorable 
in the usual sense. The in- and out-degrees of $G'$ are maximized by three times the maximum 
in- and out-degree of any $u_1, u_2, v_1, v_2$ in $J_1, J_2$.
\end{proof}

In this case there is again a simpler construction when $r=2$.

\begin{theorem}\label{2.1}
\label{few}
Given $k\geq 3$,
the problem of acyclic 2-coloring a digraph $G$ of directed girth at least
$k$ and of in-degrees and out-degrees at most $k+1$, is NP-complete. 
\end{theorem}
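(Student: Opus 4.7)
The plan is to mimic the NAE-SAT reduction of Theorem~\ref{3.2}, replacing each undirected element by its directed analogue. I would reduce from NAE-$k$-SAT with each variable occurring at most three times, NP-complete by Feder and Ford~\cite{ff}. Given such an instance $\varphi$, I build a digraph $G$ in two pieces: for each clause, introduce one directed cycle of length exactly $k$ whose $k$ vertices are the $k$ occurrences of that clause; for each variable $x$, attach to its occurrence vertices a variable gadget that forces them to receive the same color in every acyclic $2$-coloring. The clause cycle then automatically captures the not-all-equal requirement, since a monochromatic directed cycle is forbidden.

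For the variable gadget I reuse the template of Theorem~\ref{2.2}: take a digraph $K$ of directed girth at least $k$ and dichromatic number at least $3$, and an arc $uv$ whose removal makes $K$ acyclically $2$-colorable, and set $J_1 = K - uv$. In any acyclic $2$-coloring of $J_1$, the endpoints $u$ and $v$ must share a color, for otherwise re-adding $uv$ would produce an acyclic $2$-coloring of $K$, contradicting its dichromatic number. Chaining two copies of $J_1$ along a common middle vertex yields a gadget with three distinguished interface vertices forced to share a color, which I identify with the three occurrences of $x$.

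The main obstacle is the precise degree bound $k+1$. Each occurrence vertex already absorbs in-degree $1$ and out-degree $1$ from its clause cycle, leaving at most $k$ on each side for the variable gadget; the critical point is the middle vertex of the chain, where the two copies of $J_1$ together contribute an in-degree equal to the sum of the $K$-in-degrees of $u$ and $v$ minus one, and analogously for out-degrees. A routine count then reduces the theorem to producing a $K$ whose chosen arc $uv$ has endpoint in- and out-degree sums bounded by $k+1$. This is exactly what the high-girth, high-dichromatic construction promised in the next section (a consequence of the relational-systems theorem of~\cite{fv}) is meant to deliver, with degree parameters tuned to $k$.

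Verifying the directed girth is then straightforward but requires a small case analysis. Clause cycles have length exactly $k$; cycles lying entirely inside one copy of $J_1$ have length at least $k$ because $K$ does; and any directed cycle of $G$ crossing between components must traverse a backward $v$-to-$u$ path inside some copy of $J_1$, which has length at least $k-1$ (otherwise $K$ would contain a cycle of length less than $k$ through the arc $uv$), plus at least one further clause edge, giving total length at least $k$. Correctness is then routine: every NAE-assignment of $\varphi$ yields an acyclic $2$-coloring of $G$, and conversely every acyclic $2$-coloring of $G$ projects, via the equality-forcing gadgets, onto a consistent NAE-assignment of $\varphi$.
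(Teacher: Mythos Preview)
Your reduction framework---NAE-$k$-SAT with three occurrences per variable, a directed $k$-cycle per clause, and an equality-forcing gadget per variable---is exactly the paper's. The gap is in the variable gadget and, through it, in the degree bound, which is the entire content of this theorem: the existence of \emph{some} $d(r,k)$ is already Theorem~\ref{2.2}, so Theorem~\ref{few} stands or falls on the specific bound $k+1$.

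Your gadget (two chained copies of $J_1=K-uv$) does not achieve that bound. You correctly compute that the middle vertex has in-degree $d^-_K(u)+d^-_K(v)-1$ (and similarly for out-degree), and you then assert that ``the high-girth, high-dichromatic construction promised in the next section \ldots\ with degree parameters tuned to $k$'' supplies a $K$ with endpoint degree sums at most $k+1$. It does not. The relational-systems theorem of~\cite{fv} (Theorem~\ref{csps}) gives no degree control whatsoever, and the explicit digraph $H^k_2$ of Theorem~\ref{2.3} has \emph{every} vertex with in- and out-degree about $k+1$, so the sum at your glued middle vertex is about $2k$, not $k+1$. Your approach therefore yields only some unspecified $d(k)$, i.e.\ it reproves the $r=2$ case of Theorem~\ref{2.2}, not Theorem~\ref{few}.

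The paper avoids this by abandoning the generic $J_1$ template and building an ad~hoc gadget $H_k$ directly: layers $S_0,\dots,S_{k-1}$ with $|S_0|=1$, $|S_1|=3$ (independent), $|S_i|=k$ for $i\ge 2$ (each a directed $k$-cycle), and all arcs from $S_i$ to $S_{i+1\bmod k}$. One checks that in any acyclic $2$-coloring each $S_i$ with $i\ge 2$ sees both colors, forcing all of $S_1$ to the color opposite $S_0$; the three vertices of $S_1$ are then the occurrence vertices. A direct count shows every vertex has in- and out-degree at most $k+1$, including the $S_1$ vertices after absorbing one clause-cycle arc in and one out. This explicit layered gadget, not a black-box $K$, is what makes the $k+1$ bound work.

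A secondary issue: your girth argument assumes a crossing cycle must use a $v$-to-$u$ path in some $J_1$, but a cycle can equally traverse a $u$-to-$v$ path in $J_1$, and the girth of $K$ gives no lower bound on such paths once the arc $uv$ is removed. This too is handled automatically by the layered $H_k$, where any path between two $S_1$ vertices must wind through all $k$ layers.
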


\begin{proof}
We proceed as in Theorem \ref{3.2}, reducing from 
the not-all-equal $k$-satisfiability problem with three
occurences of each variable $x$, replacing each clause
with a disjoint directed $k$-cycle. To ensure that the three
occurences of a variable $x$ in clauses have the same
value, we consider for each variable $x$ a separate digraph 
$H_k$ whose vertices are partitioned into $k$ sets $S_0,S_1,\ldots,S_{k-1}$ 
with $S_0$ of size one, $S_1$ an independent set of size three
and each $S_i$ for $2\leq i<k$ inducing a directed $k$-cycle. In addition, 
we include all edges from $S_i$ to $S_{i+1}$, and $S_{k-1}$ to $S_0$. 
In any acyclic two-coloring of $H_k$, each $S_i$ for $2\leq i<k$ must 
have both colors, so the colors in $S_1$ must all be different from the color
in $S_0$, and hence the same. The three elements of $S_1$ can thus 
be identified with the three occurences of $x$.
\end{proof}

\section{High Girth Graphs and Digraphs}

In this section we discuss the existence of high-girth graphs and digraphs without 
acyclic $r$-colorings. For ordinary graph $r$-colorings, we have the following 
well-known result of \cite{erdos}.

\begin{theorem}\label{PaliBacsi}
For any $r,k\geq 3$, there exists a graph with girth at least $k$ which is not $r$-colorable.
\end{theorem}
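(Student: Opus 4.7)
The plan is to use the classical probabilistic method of Erd\H{o}s. Fix $r,k \geq 3$, choose a small constant $\theta$ with $0 < \theta < 1/(k-1)$, and consider the random graph $G = G(n,p)$ on $n$ vertices with edge probability $p = n^{\theta - 1}$. Let $n$ be sufficiently large, to be chosen at the end. The aim is to show that with positive probability $G$ has few short cycles and no large independent set, and then to delete one vertex from each short cycle, leaving a graph of girth at least $k$ that still cannot be $r$-colored.

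First I would bound the expected number $X$ of cycles of length at most $k-1$. Since there are at most $n^i$ ordered cyclic sequences of $i$ distinct vertices, and each cycle appears with probability $p^i$, we have $\mathbb{E}[X] \leq \sum_{i=3}^{k-1} n^i p^i = \sum_{i=3}^{k-1} n^{i\theta}$, which is $o(n)$ by the choice $\theta < 1/(k-1)$. By Markov's inequality, $X \leq n/2$ with probability at least, say, $2/3$ for large $n$.

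Next I would bound the independence number. For any integer $t$, the probability that a fixed $t$-set is independent is $(1-p)^{\binom{t}{2}} \leq \exp(-p t(t-1)/2)$, so by a union bound $\Pr[\alpha(G) \geq t] \leq n^t \exp(-p t(t-1)/2)$. Taking $t = \lceil (3 \ln n)/p \rceil = \lceil 3 n^{1-\theta} \ln n \rceil$, this probability tends to $0$, and in particular is below $1/3$ for large $n$. For large enough $n$, this $t$ is still smaller than $n/(2r)$.

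Combining both events via a union bound, there exists a graph $G$ on $n$ vertices with at most $n/2$ cycles of length less than $k$ and with $\alpha(G) < n/(2r)$. Delete one vertex from each short cycle to obtain $G'$; then $G'$ has girth at least $k$, has at least $n/2$ vertices, and still satisfies $\alpha(G') \leq \alpha(G) < n/(2r) \leq |V(G')|/r$. Since any $r$-coloring yields a color class that is an independent set of size at least $|V(G')|/r$, the graph $G'$ is not $r$-colorable. The main delicacy is calibrating $\theta$: it must be small enough that the expected number of short cycles is sublinear in $n$, yet the chosen independence-number threshold $t \approx n^{1-\theta}\ln n$ must stay below $n/(2r)$, which holds precisely because $\theta > 0$.
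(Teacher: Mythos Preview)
Your argument is the classical Erd\H{o}s probabilistic proof and it is correct: the choice $0<\theta<1/(k-1)$ makes the expected number of short cycles $o(n)$, the choice $t\approx 3n^{1-\theta}\ln n$ drives the probability of a large independent set to zero while keeping $t<n/(2r)$ for large $n$, and the final deletion step yields the desired graph.

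The paper, however, does not reprove this directly. It instead derives Theorem~\ref{PaliBacsi} as a special case of the general constraint-satisfaction result Theorem~\ref{csps} (from \cite{fv}, with a constructive version in \cite{k}): graph $r$-coloring is a CSP, $K_{r+1}$ is an unsatisfiable instance, and Theorem~\ref{csps} produces an equivalent instance of girth at least $k$, which is therefore a non-$r$-colorable graph of high girth. Your route is self-contained and elementary, requiring nothing beyond Markov and a union bound; the paper's route is less direct but buys uniformity---the same black box simultaneously yields the high-girth results for dichromatic number (Theorem~\ref{BojanBacsi}) and for arboricity (Theorem~\ref{newest}) via the $r$-valued not-all-equal $k$-SAT problem, which is exactly what the surrounding NP-completeness constructions need. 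It also comes with a derandomized version via~\cite{k}, whereas your argument is inherently nonconstructive.
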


For dichromatic number we have the following theorem~\cite{bor2}.

\begin{theorem}\label{BojanBacsi}
For any $r,k\geq 3$, there exists a digraph with directed girth at least $k$ which is not 
acyclically $r$-colorable.
\end{theorem}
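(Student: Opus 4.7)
My plan is to adapt Erd\H{o}s's probabilistic argument behind Theorem~\ref{PaliBacsi} to the directed setting. Consider the random digraph $D=D(n,p)$ on vertex set $[n]$ in which each ordered pair of distinct vertices is independently an arc with probability $p=n^{\theta-1}$, where $\theta=\theta(r,k)>0$ is a small constant chosen so that $\theta(k-1)<1$. I will show that with positive probability $D$ has few short directed cycles and no large induced acyclic subdigraph, then delete one vertex from each short cycle to obtain the desired example.

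The first step is to control short directed cycles. The expected number of directed $\ell$-cycles in $D$ is ${n \choose \ell}(\ell-1)!\,p^{\ell}\le (np)^{\ell}/\ell = n^{\theta\ell}/\ell$, which summed over $2\le\ell<k$ is $o(n)$. By Markov's inequality, with probability at least $1/2$ the number of short directed cycles in $D$ is also $o(n)$.

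The second step---the main obstacle---is to bound the largest induced acyclic subdigraph. For a subset $S\subseteq [n]$ of size $m$, $D[S]$ is acyclic iff it admits some topological order, so union-bounding over the $m!$ orderings of $S$ gives $\Pr[D[S]\text{ is acyclic}]\le m!\,(1-p)^{m(m-1)/2}$, since in any fixed order each of the $m(m-1)/2$ backward ordered pairs must fail to be an arc. The expected number of acyclic induced $m$-subsets is therefore at most $n^{m}\,(1-p)^{m(m-1)/2}\le \exp\bigl(m\ln n-p\,m(m-1)/2\bigr)$. For $m=\lceil n/(r+1)\rceil$ the subtracted term is $\Theta(n^{1+\theta})$, which dominates $m\ln n=O(n\ln n)$, so the expectation is $o(1)$ and with high probability every induced sub-DAG in $D$ has fewer than $n/(r+1)$ vertices.

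Finally I would fix an outcome $D$ in the intersection of the two high-probability events and delete one vertex from each short directed cycle to obtain $D'$. Then $D'$ has at least $n-o(n)$ vertices, directed girth at least $k$, and largest induced sub-DAG of size at most $n/(r+1)$, so $\vec{\chi}(D')>r$, as desired. The technical wrinkle relative to the undirected case is the union bound over topological orderings in the second step: because ``acyclic'' is not a single forbidden substructure, one pays an extra $m!$ factor, but the gap between $p\,m(m-1)/2\sim n^{1+\theta}$ and $m\ln n$ absorbs it with room to spare.
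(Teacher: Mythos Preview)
Your probabilistic argument is correct and self-contained; the only points worth tightening are cosmetic (e.g., making explicit that the two high-probability events hold simultaneously with probability at least $1/2-o(1)>0$, and that $r(m-1)<rn/(r+1)=n-n/(r+1)<n-o(n)\le |V(D')|$ for large $n$).

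However, it is a genuinely different route from the paper's. The paper does \emph{not} re-run Erd\H{o}s's deletion argument on random digraphs. Instead it quotes a general constraint-satisfaction fact (Theorem~\ref{csps}, from \cite{fv}): every instance of a CSP is equivalent to a high-girth instance. Applying this to an obviously unsatisfiable instance of $r$-valued not-all-equal $k$-satisfiability yields a high-girth unsatisfiable instance (Theorem~\ref{tomasko}), and replacing each clause by a directed $k$-cycle on its variables produces the digraph witnessing Theorem~\ref{BojanBacsi}. The paper then also gives an explicit recursive construction $H^k_r$ (Theorem~\ref{2.3}) with concrete size bounds. What your approach buys is a short, elementary, stand-alone proof requiring no CSP machinery; what the paper's approach buys is uniformity (one lemma simultaneously yields Theorems~\ref{PaliBacsi}, \ref{BojanBacsi}, and~\ref{newest}), together with a constructive version and explicit quantitative control on the size of the witness, which feeds directly into the degree bounds $d(r,k)$ used in the NP-completeness reductions.
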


A very general version of such results is proved in \cite{fv} (Theorem 5). The proof in \cite{fv} is probabilistic
but there is a constructive proof in \cite{k}. We refer the reader to \cite{fv} for the definition of a constraint 
satisfaction problem, the girth of an instance, and equivalence of problems. We explain below the special 
case sufficient for our applications here.

\begin{theorem}\label{csps}
For every constraint satisfaction problem $P$, any instance $I$ of $P$, and any integer $k\geq 3$, 
there exists an instance $I'$, equivalent to $I$, with girth at least $k$.
\end{theorem}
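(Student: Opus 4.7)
The plan is to follow the probabilistic lifting argument from \cite{fv}. View the instance $I$ as a relational structure and form its incidence bipartite graph $B_I$, which has the variables on one side, the constraint instances on the other, and an edge whenever a variable appears in a constraint; the girth of $I$ is the girth of $B_I$. I would build $I'$ as a random $N$-fold cover of $B_I$ equipped with the induced constraint labeling: for each variable $v$ introduce $N$ copies $v^{(1)},\ldots,v^{(N)}$, and for each constraint $c = R(v_{i_1},\ldots,v_{i_t})$ of $I$ draw independent uniformly random permutations $\pi_{c,1},\ldots,\pi_{c,t}\in S_N$ and include the lifted constraints $c^{(j)} = R(v_{i_1}^{(\pi_{c,1}(j))},\ldots,v_{i_t}^{(\pi_{c,t}(j))})$ for $j=1,\ldots,N$.

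The high-girth condition is then a routine first-moment calculation. A cycle of length $2\ell$ in the incidence graph of $I'$ projects to a closed walk of length $2\ell$ in $B_I$ whose alternating product of permutation labels is the identity, and for any fixed such walk this happens with probability $O(1/N)$. Since the number of closed walks of length less than $k$ in $B_I$ is bounded independently of $N$, the expected number of short cycles in $I'$ tends to zero as $N\to\infty$, so some realization has girth at least $k$. The forward direction of equivalence is also immediate: any solution $f$ of $I$ lifts to $f'(v^{(j)}) := f(v)$ on $I'$, because every constraint of $I'$ is a renaming of some constraint of $I$ with the same relation symbol.

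The main obstacle is the converse direction of equivalence: a solution $g$ of $I'$ must yield a solution of $I$. A naive section $s : V \to \{1,\ldots,N\}$ with $f(v) := g(v^{(s(v))})$ typically fails, since for a given constraint $c$ of $I$ there need not exist a single index $j$ such that all of $v_{i_1}^{(s(v_{i_1}))},\ldots,v_{i_t}^{(s(v_{i_t}))}$ are the variables of $c^{(j)}$. This is the technical heart of Theorem~5 of \cite{fv}: their argument engineers the cover so that solutions of $I'$ can be transformed by deck transformations into ones constant along the fibers of the projection $v^{(j)}\mapsto v$, at which point they descend to solutions of $I$. At this step I would simply invoke the theorem of \cite{fv} rather than reprove the descent argument from scratch, and I would also note that the constructive variant in \cite{k} can be substituted when an explicit $I'$ is required.
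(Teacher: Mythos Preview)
The paper does not give its own proof of this theorem: it is quoted as Theorem~5 of \cite{fv} (with the constructive alternative \cite{k} noted), and is then used as a black box. Since your proposal ultimately also invokes that theorem for the decisive step, you and the paper end up in the same place.

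That said, the sketch you give before the invocation contains two genuine problems. First, the first-moment bound is wrong: in a random $N$-fold cover, a fixed non-backtracking closed walk in $B_I$ lifts to a cycle in $I'$ whenever the product permutation along it has a fixed point, and the expected number of fixed points of a uniformly random permutation of $[N]$ is $1$, not $o(1)$. Equivalently, each of the $N$ choices of fibre index closes up with probability $1/N$, so the contributions cancel. Hence the expected number of short cycles in $I'$ is a constant depending on $B_I$ and $k$ and does \emph{not} tend to zero as $N\to\infty$; one needs a Poisson-limit argument (positive probability of no short cycles) or a deletion step, not the first moment alone. Second, and more importantly, the plain random cover you describe is not the object Feder and Vardi build: the backward direction of equivalence requires additional structure in the construction so that solutions of $I'$ can be made fibre-constant and hence descend to $I$. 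Your ``invoke \cite{fv}'' at the end therefore replaces the entire construction, not just the descent lemma, and the preceding sketch should be read as motivation rather than as the first half of a proof. None of this affects the bottom line, since the paper itself treats the result as a citation.
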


The {\em $r$-valued not-all-equal $k$-satisfiability problem} is an example of a constraint satisfaction 
problem. Here an instance is a set of variables $x_1, x_2, \dots, x_n$ each taking one of $r$ possible
values, and a set of constraints $C_1, C_2, \dots, C_m$, each consisting of exactly $k$ variables. The
solution of an instance is an assignment of values to the variables such that no constraint $C_i$ has
all its variables assigned the same value. (This can also be viewed as the $r$-coloring problem of
$k$-uniform hypergraphs; cf. the special case $r=2$ in the proof of Theorem \ref{3.2}.) In this case,
the girth of an instance is the smallest set of variables $y_0, y_1, \dots, y_{k-1}$ such that any two
consecutive $y_j, y_{j+1}$ (subscripts modulo $k$) occur together in some constraint $C_i$. We say
that an instance $I$ is equivalent to an instance $I'$ if $I$ has a solution if and only if $I'$ has a
solution. There obviously are instances without a solution, for example $n=(k-1)r+1$ variables
and all $m={n \choose k}$ constraints imposed on each subset of size $k$. (If each variable is 
assigned one of $r$ values, some $k$ variables will have the same value, so $I$ has no solution.)
We obtain the following corollary of the theorem.

\begin{theorem}\label{tomasko}
For any $r,k\geq 3$, there exists an instance of $r$-valued not-all-equal $k$-satisfiability problem
with girth at least $k$, which has no solution.
\end{theorem}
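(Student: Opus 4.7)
The plan is to apply Theorem \ref{csps} to a small, obviously unsatisfiable instance of the $r$-valued not-all-equal $k$-satisfiability problem, and use the fact that girth reduction preserves (un)satisfiability via equivalence. Since the theorem statement has already handed us the right black box, the proof becomes a one-line deduction once a suitable starting instance $I$ is in hand.

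First I would exhibit a concrete unsatisfiable instance $I$. The paragraph preceding the theorem already supplies the canonical choice: take $n = (k-1)r + 1$ variables, each ranging over $r$ values, and impose a not-all-equal constraint on every $k$-subset of the variables, giving $m = \binom{n}{k}$ constraints. By the pigeonhole principle, any assignment of $r$ values to $n = (k-1)r+1$ variables forces some value to be used at least $k$ times; the $k$-subset of variables receiving that common value then violates its constraint. Hence $I$ has no solution.

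Next I would invoke Theorem \ref{csps} with this $I$ and the given $k \geq 3$ to produce an equivalent instance $I'$ of girth at least $k$. Equivalence means $I'$ has a solution if and only if $I$ does; since $I$ has no solution, neither does $I'$, and $I'$ has the desired girth. That is the required instance.

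The only conceivable obstacle is making sure the unsatisfiable starting $I$ is indeed a legitimate instance of the $r$-valued not-all-equal $k$-satisfiability problem in the precise formulation used by Theorem \ref{csps}, but that has been arranged for us in the paragraph defining the problem. So the proof really is just: pigeonhole gives an unsatisfiable $I$, Theorem \ref{csps} lifts it to a high-girth equivalent $I'$, and equivalence transfers unsatisfiability.
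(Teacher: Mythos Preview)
Your proposal is correct and follows exactly the paper's approach: the paper derives Theorem~\ref{tomasko} as an immediate corollary of Theorem~\ref{csps} applied to the very same pigeonhole-unsatisfiable instance on $(k-1)r+1$ variables with all $k$-subsets constrained. There is nothing to add.
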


We can transform the instance into a digraph by taking a vertex for each variable $x_i$ and form
a directed $k$-cycle on any set of $k$ variables occurring in a constraint $C_i$. Clearly, this digraph
has directed girth at least $k$. We obtain a new proof of Theorem \ref{BojanBacsi}.

By replacing each constraint with an undirected $k$-cycle, we similarly conclude the following useful fact.

\begin{theorem}\label{newest}
For any $r,k\geq 3$, there exists a graph with girth at least $k$ which is not acyclically $r$-colorable.
\end{theorem}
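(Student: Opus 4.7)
The plan is to mimic the construction that just preceded Theorem \ref{BojanBacsi}, but using undirected cycles in place of directed ones. Apply Theorem \ref{tomasko} to obtain an instance $I$ of the $r$-valued not-all-equal $k$-satisfiability problem which has girth at least $k$ and admits no solution. From $I$ I build a graph $G$ by taking one vertex for each variable, and, for each constraint $C_i$ of $I$, adding an undirected $k$-cycle on the $k$ variables appearing in $C_i$ (choose any fixed cyclic ordering of those variables).

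Next I verify that $\text{girth}(G) \geq k$. Any cycle $C$ in $G$ whose edges all belong to a single constraint-gadget is contained in that gadget's $k$-cycle, so has length exactly $k$. Otherwise, $C$ uses edges from several constraints; listing the distinct vertices $y_0, y_1, \dots, y_{m-1}$ encountered along $C$ gives a cyclic sequence of variables of length $m$ in which each consecutive pair $y_j, y_{j+1}$ (indices mod $m$) lies in a common constraint (namely the constraint whose gadget contains the edge $y_j y_{j+1}$ of $C$). By the girth hypothesis on $I$ we must have $m \geq k$, hence $|C| \geq k$.

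To see that $G$ is not acyclically $r$-colorable, suppose for contradiction some $r$-coloring of the vertices of $G$ makes every color class acyclic. Then in particular no $k$-cycle coming from a constraint can be monochromatic, since a monochromatic $k$-cycle would be a cycle inside a color class. Thus for every constraint $C_i$, the $k$ variables occurring in $C_i$ receive at least two distinct values among the $r$ colors. This exactly says the coloring is a not-all-equal solution to $I$, contradicting the fact that $I$ has no solution.

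No step looks hard; the only place one has to be mildly careful is the girth argument, where one must make sure that a graph cycle using edges from several constraint gadgets genuinely yields a cyclic sequence of variables of length $\geq k$ with consecutive pairs co-occurring in constraints, so that the girth-of-instance hypothesis of Theorem \ref{tomasko} can be invoked. Everything else is a direct translation of the digraph argument given just before the theorem.
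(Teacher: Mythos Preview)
Your proposal is correct and follows exactly the approach the paper intends: take the unsatisfiable high-girth instance of $r$-valued not-all-equal $k$-satisfiability from Theorem~\ref{tomasko} and replace each constraint by an undirected $k$-cycle on its variables. In fact the paper's own justification is the single line ``By replacing each constraint with an undirected $k$-cycle, we similarly conclude the following useful fact,'' so your write-up is a faithful (and more careful) expansion of precisely that argument.
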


We close this section by noting that graph $r$-coloring is another example of a constraint satisfaction 
problem, and applying Theorem \ref{csps} to the graph $K_{r+1}$ which is not $r$-colorable, we
obtain Theorem \ref{PaliBacsi}.

The digraph $K$ from Theorem \ref{BojanBacsi} may be assumed to contain an edge $uv$ such that 
$K - uv$ is acyclically $r$-colorable, e.g., by assuming that $K$ is minimal with respect to inclusion.
A similar remark applies to the graph $K$ from Theorem \ref{newest}.

The obvious question is whether a more explicit construction for $H$ and $H'$
can be given, thus avoiding randomization~\cite{erdos,fv} or a more complex
construction~\cite{k}. For example, in the case $k=3$,
a random tournament as in Theorem~\ref{ran} of size
$O(r\log r)$ suffices for $H$, yet it remains hard to find $H$ and $H'$.
Our construction below gives $H=H'$ of size polynomial in $k$ for $r$ fixed,
or polynomial in $r$ for $k$ fixed.

\begin{theorem}\label{2.3}
For every $r\geq 1$, $k\geq 3$, there exists a digraph $H^k_r$ with the following properties.
\begin{enumerate}
\item
$H^k_r$ has at most $k^r$ vertices;
\item
moreover, if $k\leq r$, then $H^k_r$ has at most 
$$k^{\lceil k(1+\log(\frac{r}{k}))\rceil} \leq k{(\frac{er}{k})}^{k\log k}$$
vertices;
\item
$H^k_r$ can be constructed in time linear in the number of vertices;
\item
$H^k_r$ does not have an acyclic $r$-coloring;
\item
for each edge $uv$, the graph $H^k_r-uv$ does have an acyclic $r$-coloring; and 
\item
$H^k_r$ has directed girth $k$.
\end{enumerate}
\end{theorem}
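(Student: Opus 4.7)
My plan is to construct $H^k_r$ by induction on $r$ via a recursive blow-up based on the directed $k$-cycle. The base case $r=1$ is $H^k_1=C_k$, the directed $k$-cycle on $k$ vertices: it has exactly $k$ vertices, directed girth $k$, is not acyclically $1$-colorable (it is itself a directed cycle), and removing any edge leaves a directed path, which is trivially acyclic, so every edge is critical.

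For the inductive step, given $H^k_{r-1}$ satisfying the six properties, I would form $H^k_r$ by taking $k$ vertex-disjoint copies $D_0,\dots,D_{k-1}$ of $H^k_{r-1}$ and inserting inter-copy ``bridge'' edges chosen so that (a) every bridge advances a cyclic copy index by $1$, so any directed cycle using bridges must traverse a positive multiple of $k$ bridges, preserving girth $\ge k$ (with equality attained by lifting any girth-$k$ cycle of $H^k_{r-1}$); and (b) in any acyclic $r$-coloring of $H^k_r$, the induced colorings on the $k$ copies must ``collide'' in such a way that some copy is effectively using at most $r-1$ colors, yielding a monochromatic directed cycle in that copy by the inductive non-colorability of $H^k_{r-1}$. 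Properly arranged, the vertex count multiplies by $k$ per level, giving $|V(H^k_r)|\le k^r$, and the explicit recursion runs in time linear in the final vertex count.

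For the refined bound when $k\le r$, I would replace the ``$+1$ to the dichromatic number per level'' step by a denser step that multiplies the dichromatic number by a factor of order $\log k$, so that $O(k\log(r/k))$ iterations suffice and the bound $k^{\lceil k(1+\log(r/k))\rceil}$ follows. Edge-criticality propagates through the recursion: removing any edge of $H^k_r$ either breaks an intra-copy edge (making one $D_j$ acyclically $(r-1)$-colorable by the inductive criticality) or breaks a bridge (relieving one coloring constraint), and in either case a global acyclic $r$-coloring can be assembled from inductive colorings of the reduced copies together with a single extra color.

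The main obstacle is the inter-copy bridge design for (b): the naive cyclic product $H^k_{r-1}\times C_k$ preserves girth but fails to boost the dichromatic number, since any acyclic $(r-1)$-coloring of $H^k_{r-1}$ lifts trivially to the product. A Zykov- or Erd\H{o}s--Hajnal-style enrichment therefore appears necessary, with bridges inserted in parallel across many ``witness'' vertex choices so that the color patterns on the $k$ copies must collide somewhere; the delicate point is to arrange these extra bridges without violating the cyclic index structure that guarantees girth $\ge k$, and this is where I expect the technical heart of the construction to lie.
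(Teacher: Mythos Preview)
Your outline is in the right neighborhood but misidentifies both the bridge structure and the mechanism that drives properties (4) and (5), and as stated the recursion would not yield criticality.

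First, the bridges are much simpler than you expect: the paper just puts \emph{all} arcs from each copy to the next (and from the last to the first). There is no Zykov/Erd\H{o}s--Hajnal enrichment; the cyclic index structure is exactly what you already describe, so girth $\ge k$ is immediate. The argument for (4) is then a pigeonhole on \emph{avoided} colors, and it goes the opposite way from what you wrote: each copy, being not acyclically $r_j$-colorable for the appropriate $r_j$, must use at least $r_j+1$ colors in any $r$-coloring and hence avoids at most $r-r_j-1$ colors; the parameters are chosen so that the total number of avoided colors over the $k$ copies is at most $r-1$, whence some color appears in every copy and yields a monochromatic $k$-cycle \emph{across} the copies. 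The contradiction is never ``some copy uses at most $r-1$ colors''.

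Second, this pigeonhole is also what makes your $k$-copies-of-$H^k_{r-1}$ plan fail for property (5). With that choice each copy already needs all $r$ colors, so deleting one intra-copy arc frees up one color in one copy, but every other color still appears in all $k$ copies and you get a cross monochromatic $k$-cycle; $H^k_r-uv$ is then still not acyclically $r$-colorable. The paper fixes this by recursing not to $r-1$ but to $r'\approx r-\lceil (r-1)/k\rceil$: it takes $a$ copies of $H^k_{r'}$ and $b$ copies of $H^k_{r''}$ with $a+b=k$ and $a\lfloor (r-1)/k\rfloor+b\lceil (r-1)/k\rceil=r-1$, so that each copy avoids roughly $r/k$ colors and the avoided counts sum to exactly $r-1$. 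Deleting one arc lets one copy avoid one extra color, pushing the total to $r$, and then one can distribute the avoided colors so that every color misses some copy; this is what gives (5). The deeper recursion (multiplicative rather than additive drop in $r$) is also what produces the refined bound in (2), not a separate ``denser step''.
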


This gives the bound $d(r,k)\leq 3|V(H^k_r)|$ in Theorem~\ref{3.1}.

\begin{proof}
We fix $k$, and let $H^k_1$ be a $k$-cycle, satisfying all conditions. 
For $H^k_r$ with $r\geq 2$, write
$r-1=a\lfloor\frac{r-1}{k}\rfloor + b\lceil\frac{r-1}{k}\rceil$ with
$a,b\geq 0$ and $a+b=k$.

Let $r'=r-1-\lfloor\frac{r-1}{k}\rfloor$
and $r''=r-1-\lceil\frac{r-1}{k}\rceil$. Define $H^k_r$ as the disjoint union
of $a$ copies of $H^k_{r'}$ and $b$ copies of $H^k_{r''}$, for a total of $k$
copies, with all edges joining each such copy to the next,
or the last one to first one.

We prove the last three conditions by induction on $r$.
The first $a$ copies need at least $r'+1$ colors, avoiding at most only
$\lfloor\frac{r-1}{k}\rfloor$ colors.
The last $b$ copies need at least $r''+1$ colors, avoiding at most only
$\lceil\frac{r-1}{k}\rceil$ colors.
By the definition of $a,b,$ at most $r-1$ colors are avoided by at least one
copy, so some color $i$ appears in all the copies, and this gives a cycle of
color $i$ of length $k$ across all the copies. This proves condition (4).

For condition (5), suppose the removed edge $uv$
is inside the $j$th copy $H_j$. Then $H_j$ can be colored with only $r'$
colors ($r''$ colors), giving one more color than in the definition of $a,b,$
for a total of $r$ avoided colors across all the copies, so no color appears
in all the copies, and there is no cycle across all the copies that gives the
same color in all copies. This proves condition (5) in this case.

If the removed edge is $uv$ joins say $H_j,H_{j+1},$ then color $H_j-u$ and
$H_{j+1}-v$ with only $r'$ (or $r''$) colors by condition (5), avoiding one more 
color in each of $H_j,H_{j+1}$, with only color $i$ avoided in both cases. 
Assign color $i$ to $u,v$. This gives us again $r-1$ avoided colors, but
including color $i$ in all copies does not gives a cycle of length $k$ of
color $i$, since the cycle would have to go through edge $uv$. This proves
condition (5).

For condition (6), note that all cycles either go through only one
copy and are thus inductively of length at least $k$, or go through all the
copies and must thus be of length at least $k$.

For conditions (1, 2), note that each step of the induction
has $r',r''\leq r(1-\frac{1}{\min(r,k)})$ and $k|V(H^k_{r'})|\geq |V(H^k_r)|$.
\end{proof}

Note that this last result allows us to prove Theorems~\ref{few} and~\ref{3.1}
without necessarily assuming that $r,k$ are constants, but may depend on $n$,
for as long as the bound $|V(H^k_r)|\leq n^{1-\epsilon}$ holds with
$\epsilon>0$ constant.

\section{Approximation}

We now prove a hardness of approximation result.

\begin{lemma}
Let $0<\epsilon<1$ be a constant. Given a complete bipartite graph
$H=(U,V,E)$ with $|U|=|V|=n$,
let $H'$ be the digraph obtained from $H$ by orienting the edges in either
direction independently with equal probability $\frac{1}{2}$. Then with
probability approaching 1 as $n$ goes to infinity, for every two subsets
$U'\subseteq U, V'\subseteq V$ having $|U'|,|V'|\geq n^{\epsilon}$,
the digraph induced by $U'\cup V'$ contains a cycle.
\end{lemma}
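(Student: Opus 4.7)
The plan is to reduce to sets of a fixed size by monotonicity, then exploit independence across disjoint ``slices'' of the random bipartite tournament. Put $m = \lceil n^{\epsilon}\rceil$. Since containing a cycle is preserved under enlarging the vertex set, it suffices to prove the statement with the equality $|U'|=|V'|=m$: any larger pair contains an instance of this size as an induced subgraph.

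Fix such $U', V'$ and any pair $u_1, u_2 \in U'$. For each $v \in V'$ the orientations of $u_1v$ and $u_2v$ put $v$ into one of four equally likely ``types'' according to which of the two edges points from $U$ to $V$. A directed $4$-cycle on $\{u_1, u_2, v_1, v_2\}$ is present precisely when one of $v_1, v_2$ has type $u_1 \to v \to u_2$ and the other has type $u_2 \to v \to u_1$. Hence the event ``no $4$-cycle uses $\{u_1, u_2\}$'' forces $V'$ to omit at least one of these two specific types entirely, an event of probability at most $2(3/4)^m$.

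Next I would pick $\lfloor m/2 \rfloor$ pairwise disjoint pairs inside $U'$. Because each associated ``no $4$-cycle'' event depends only on the edges incident to its two chosen vertices, disjointness of the pairs makes the events mutually independent. Multiplying gives that the induced subgraph on $U' \cup V'$ contains no $4$-cycle (and a fortiori no cycle at all) with probability at most $\bigl(2(3/4)^m\bigr)^{\lfloor m/2 \rfloor}$.

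The final step is a union bound over the at most $\binom{n}{m}^2 \leq n^{2m}$ choices of $(U', V')$, yielding a total failure probability of at most $n^{2m}\bigl(2(3/4)^m\bigr)^{m/2}$. Taking logarithms, the negative $\Theta(m^2) = \Theta(n^{2\epsilon})$ contribution dominates the $O(m \log n) = O(n^{\epsilon}\log n)$ cost of the union bound, so the bound tends to $0$. The main obstacle is precisely this counting tension: a single-pair argument gives only $n^{2m}(3/4)^m$, which is too weak; harvesting $\Theta(m)$ independent pair-events to boost the exponent from linear to quadratic in $m$ is what makes the union bound go through.
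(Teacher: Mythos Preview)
Your argument is correct, but it is genuinely different from the paper's. The paper does not look for $4$-cycles at all: it fixes an \emph{ordered} $V'$ (absorbing the $|V'|!$ orderings into the union bound, for at most $n^{2n^{\epsilon}}$ choices total) and observes that if $U'\cup V'$ is acyclic with a topological order restricting to this order on $V'$, then for each $u\in U'$ the in-neighbours of $u$ inside $V'$ must form an initial segment of that order. That happens with probability $(n^{\epsilon}+1)2^{-n^{\epsilon}}$ for each $u$, independently over $u\in U'$, giving a bound of $n^{2n^{\epsilon}}(n^{\epsilon}+1)^{n^{\epsilon}}2^{-n^{2\epsilon}}\to 0$. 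So the paper extracts independence \emph{across the vertices of $U'$} via the threshold structure of a linear extension, whereas you extract it across $\lfloor m/2\rfloor$ disjoint \emph{pairs} in $U'$ via the $4$-cycle criterion. Both routes produce an $\exp(-\Theta(m^2))$ per-pair bound against an $\exp(O(m\log n))$ union-bound cost; the paper's exponent constant is a bit larger ($\ln 2$ versus $\tfrac{1}{2}\ln\tfrac{4}{3}$), but that is irrelevant here. One thing your approach buys is a slightly stronger conclusion: you actually show that every such $U'\cup V'$ contains a directed $4$-cycle, not merely some directed cycle.
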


\begin{proof}
Say $|U'|=|V'|=n^{\epsilon}$ and choose $U',V'$ ordered in at most
$n^{2n^{\epsilon}}$ ways. If
$U'\cup V'$ induces an acyclic subgraph consistent
with these ordering, then the order of the neighbors of a vertex $u\in U'$,
for some such order, will have first the edges incoming to $u$ from $V'$,
then the outgoing edges from $u$ to $V'$. This can happen in $n^{\epsilon}+1$
ways out of $2^{n^{\epsilon}}$ possible choices for the edes joining $u$.
Multiplying resulting probability over all $n^{\epsilon}$ choices of $u$
from choices of subsets $U',V'$ gives probability of there being $U'\cup V'$
acyclic at most
$$n^{2n^{\epsilon}}{(n^{\epsilon}+1)}^{n^{\epsilon}}2^{-n^{2\epsilon}},$$
which tends to zero as $n$ goes to infinity.
\end{proof}

Noga Alon informed us that the known relatively recent explicit construction for 
bipartite Ramsey graphs~\cite{bar} will give a derandomization of this lemma.
Indeed, if $U'\cup V'$ induces an acyclic digraph with a corresponding linear
order $L$, then the middle vertices of $U',V'$ in $L$ are $u,v,$ respectively.
Say the edge joining $u,v$ goes in the direction $uv$. Then there are edges
going from vertices below $u$ in $U'$ to vertices above $v$ in $V'$. (The other
case is symmetric, from below $v$ in $V'$ to above $u$ in $U'$ if the direction
is $vu$). We can define a biparite graph from $H'$ by including only edges
from $U$ to $V$. Then we just saw that we would have either a bipartite clique
or a biparite independent set with $k$ vertices in each side,
$k=n^{\epsilon}/2$. The bipartite Ramsey construction in~\cite{bar} 
guarantees this does not happen even with $k=n^{o(1)}$.

Feige and Kilian~\cite{fk} proved that it is NP-hard to find an independent
set of size greater than $n^{\epsilon}$ (thus hard to $n^{1-\epsilon}$ color)
in a graph $G$ that is colorable with
$n^{\epsilon}$ colors, for any $\epsilon>0$. As a result, it is equally hard
to find a large acyclic subgraph in a digraph, since we could replace the
edges of $G$ with digons (girth 2), so that acyclic sugraphs in the resulting
digraph correspond to independent sets in $G$.

\begin{theorem}\label{2.5}
Fix $0<\epsilon<\frac{1}{4}$. It is NP-hard
to find an acyclic induced subgraph of size greater that
$N^{\frac{1}{2}+\epsilon}$ (thus hard to find an acyclic
$N^{\frac{1}{2}-\epsilon}$ coloring)
in an $N$-vertex digraph $G'$ without digons, i.e., of girth at least 3,
that has an acyclic $r$-coloring with $r\leq N^{\epsilon}$.
\end{theorem}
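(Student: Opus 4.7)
The plan is to reduce from the Feige--Kilian hardness~\cite{fk} stated just above: for any $\epsilon_0>0$, deciding whether a graph on $n$ vertices has an independent set of size greater than $n^{\epsilon_0}$ is NP-hard even when the input graph is guaranteed to be $n^{\epsilon_0}$-colorable. I apply this with $\epsilon_0=\epsilon$ to obtain such a graph $G$ on $n$ vertices together with its promised proper $n^{\epsilon}$-coloring $c$, and then blow $G$ up into the desired digraph $G'$.

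To do the blowup, I replace each vertex $v\in V(G)$ by a set $S_v$ of $n$ new vertices, giving $N=n^2$ vertices in total, and for each edge $uv\in E(G)$ I insert the complete bipartite graph between $S_u$ and $S_v$, orienting its arcs by the explicit bipartite Ramsey construction of \cite{bar} exactly as in the derandomization of the previous lemma: an edge of the Ramsey graph becomes an arc from $S_u$ to $S_v$, and a non-edge becomes an arc in the opposite direction. Since each arc of $G'$ is determined by a unique pair $a\in S_u$, $b\in S_v$ with $uv\in E(G)$, there are no digons, so $G'$ has girth at least $3$. Coloring every vertex of $S_v$ with the color $c(v)$ gives an $n^{\epsilon}$-coloring of $G'$ in which each color class is actually edgeless (hence trivially acyclic), and since $n^{\epsilon}=N^{\epsilon/2}\le N^{\epsilon}$ this is the required acyclic $r$-coloring with $r\le N^{\epsilon}$.

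For the hardness direction I argue by contradiction: suppose $A\subseteq V(G')$ is an acyclic induced subgraph with $|A|>N^{1/2+\epsilon}=n^{1+2\epsilon}$. Let $t=n^{o(1)}$ be the threshold coming from the Barak et al.\ bipartite Ramsey graph, so that any two sets $U'\subseteq S_u$, $V'\subseteq S_v$ with $|U'|,|V'|\ge t$ already induce a cycle in $G'$ whenever $uv\in E(G)$, exactly as argued in the remarks following the previous lemma. Define $I=\{v\in V(G):|A\cap S_v|\ge t\}$. If $u,v\in I$ and $uv\in E(G)$, then $A\cap(S_u\cup S_v)$ would contain a cycle, contradicting the acyclicity of $A$; hence $I$ is an independent set of $G$. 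A vertex count then gives $|I|\ge(|A|-n\cdot t)/n\ge n^{2\epsilon}-n^{o(1)}>n^{\epsilon}$ for $n$ large, contradicting Feige--Kilian.

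The main obstacle is really the availability of a bipartite Ramsey construction with threshold $n^{o(1)}$, but the paper has already credited this to \cite{bar}; granted that tool, the rest is bookkeeping, and the choice $|S_v|=n$ (so $N=n^2$) is essentially forced, since we need $N^{1/2+\epsilon}/|S_v|$ to beat the Feige--Kilian threshold $n^{\epsilon}$ while keeping $n^{\epsilon}\le N^{\epsilon}$.
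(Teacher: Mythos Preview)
Your proof is correct and follows essentially the same approach as the paper's: reduce from Feige--Kilian, blow up each vertex of $G$ into a block of $n$ vertices (so $N=n^2$), and insert the bipartite digraph gadget from the preceding lemma between blocks corresponding to edges of $G$; then an acyclic set meeting many blocks in large pieces yields an independent set in $G$. The only cosmetic differences are that you work directly with the derandomized threshold $t=n^{o(1)}$ from \cite{bar} (the paper phrases its count with the $n^{\epsilon}$ threshold from the probabilistic lemma), and you run the final count as a contradiction from $|A|>n^{1+2\epsilon}$ rather than as an upper bound $|S|\le 2n^{1+\epsilon}$; neither changes the substance. One phrasing nit: you do not actually receive the coloring $c$ of $G$ as part of the input---only the promise that it exists---but your construction of $G'$ does not use $c$, so this is harmless.
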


\begin{proof}
Let $G$ be an instance of the NP-hard question of Feige and Killian.
For each vertex $v_i\in V(G)$, let $U_i$ be a set with $|U_i|=n$.
For each edge $v_iv_j\in E(G)$, join $U_i,U_j$ with the random bipartite
digraph as in the lemma (which can be derandomized).
This gives a digraph $G'$ with $N=n^2$ vertices that
has an acyclic $r$-coloring with $r\leq N^{\frac{\epsilon}{2}}$,
by copying each color of a $v_i$ into the corresponding $U_i$.

However, an acyclic induced subgraph $S$ can only meet sets $U_i$ in at least 
$n^{\epsilon}$ vertices if the corresponding $v_i$ form an independent set, by 
the lemma. Therefore there will only be found $n^{\epsilon}$ such large intersections
by the result of Feige and Killian, giving us $n^{1+\epsilon}$ vertices of $S$, plus 
small intersections, of size at most
$n^{\epsilon}$ for the remaining $U_i$, for a total 
$|S|\leq 2n^{1+\epsilon}=2N^{\frac{1+\epsilon}{2}}.$
\end{proof}

\section{Random Tournaments}

We begin with two simple observations to introduce random tournaments,
as in~\cite{erd2}.

\begin{theorem}
\label{short}
Every tournament $T$ on $n$ vertices contains a transitive subtournament
on $\lceil \log_2 (n+1)\rceil)$ vertices, and therefore has an acyclic
$\frac{n}{(1-\epsilon)\log_2 n}+n^{1-\epsilon}=O(\frac{n}{\log n})$ coloring.
\end{theorem}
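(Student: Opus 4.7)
The plan is to handle the two assertions in the statement separately: first the transitive subtournament bound, and then the coloring bound obtained by iterating it.

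For the first part, I would proceed by induction on $n$ (equivalently, by a greedy majority argument of Erd\H{o}s--Moser type). Given $T$ on $n\geq 2$ vertices, pick any $v\in V(T)$; the remaining $n-1$ vertices partition into the out-neighbourhood $N^+(v)$ and the in-neighbourhood $N^-(v)$, so one of these has size at least $\lceil (n-1)/2\rceil$. By induction, the larger side contains a transitive subtournament $S$ on $\lceil\log_2(\lceil (n-1)/2\rceil+1)\rceil$ vertices; prepending $v$ (if we recursed on $N^+(v)$) or appending $v$ (if we recursed on $N^-(v)$) gives a transitive subtournament of $T$ on one more vertex. To close the induction it suffices to note that $\lceil (n-1)/2\rceil+1\geq (n+1)/2$, so $\lceil\log_2(\lceil (n-1)/2\rceil+1)\rceil+1\geq\lceil\log_2(n+1)\rceil$.

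For the coloring bound I would apply the first part iteratively. Fix $\epsilon\in(0,1)$. While the current subtournament still has at least $n^{1-\epsilon}$ vertices, extract a transitive (hence acyclic) subtournament of size at least
\[
\lceil\log_2(n^{1-\epsilon}+1)\rceil\ \geq\ (1-\epsilon)\log_2 n,
\]
assign its vertices a fresh colour, and delete them. This step is executed at most $n/((1-\epsilon)\log_2 n)$ times before fewer than $n^{1-\epsilon}$ vertices remain; those we simply colour with at most $n^{1-\epsilon}$ additional singleton colour classes, each of which is trivially acyclic. Summing gives the stated $\frac{n}{(1-\epsilon)\log_2 n}+n^{1-\epsilon}=O(n/\log n)$ bound.

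Neither half involves a real structural difficulty: the only mild obstacle is bookkeeping with the ceiling functions in the base case of the induction, and the arithmetic needed to ensure that $\lceil\log_2(n^{1-\epsilon}+1)\rceil\geq(1-\epsilon)\log_2 n$ whenever the residual tournament is still of size at least $n^{1-\epsilon}$. Both are routine once the greedy/majority step of part one is in place.
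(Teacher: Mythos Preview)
Your proposal is correct and follows essentially the same approach as the paper: the Erd\H{o}s--Moser greedy halving argument for the transitive subtournament (the paper always picks a vertex of out-degree $\geq (n-1)/2$ and recurses on its out-neighbours, while you pick any vertex and recurse on the larger side, but this is the same idea), followed by iterated extraction of transitive subtournaments down to a residue of size $n^{1-\epsilon}$ which is then coloured with singletons.
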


\begin{proof}
Greedily select a vertex $v$ from $T$ of outdegree at least $\frac{n-1}{2}$,
remove $v$ and its in neighbors from $T$ to obtain $T'$ of size at least
$\frac{n-1}{2}$. This halving can be done $\lceil\log_2 (n+1)\rceil)-1$ times.
The chosen $\lceil \log_2 (n+1)\rceil)$
vertices $v$ will form a transitive tournament.
For the acyclic coloring, select and remove greedily transitive tournaments
from $T$, each of size at least $(\log_2 n){1-\epsilon}$,
until we reach a tournament $T'$ of size at most $n^{1-\epsilon}$,
and we use at most these many colors for $T'$.
\end{proof}

Random tournaments essentially match the preceding bound

\begin{theorem}
\label{ran}
With high probability,
a tournament $T$ on $n$ vertices only contains a transitive subtournament
on $O(\log n)$ vertices, and therefore only has an an acyclic
$\Omega(\frac{n}{\log n})$-coloring.
\end{theorem}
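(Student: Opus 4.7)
The plan is a standard first-moment / union-bound argument, together with the observation that an \emph{acyclic} sub-tournament is automatically transitive.

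First I would fix a set $S\subseteq V(T)$ with $|S|=k$ and compute $\Pr[T[S]\text{ is transitive}]$. A transitive tournament on the labelled set $S$ is determined by a linear order of its $k$ vertices, and conversely each of the $k!$ linear orders of $S$ yields a distinct tournament. Since the $\binom{k}{2}$ edges inside $S$ are oriented independently and uniformly, the probability that the orientations are consistent with one prescribed linear order is $2^{-\binom{k}{2}}$. Hence
\[
\Pr\bigl[T[S]\text{ transitive}\bigr] \;=\; k!\cdot 2^{-\binom{k}{2}}.
\]

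Next I would union-bound over all $\binom{n}{k}$ choices of $S$. The expected number of transitive $k$-subtournaments is at most
\[
\binom{n}{k}\,k!\,2^{-\binom{k}{2}} \;\le\; n^{k}\,2^{-k(k-1)/2}.
\]
Setting $k=\lceil (2+\varepsilon)\log_2 n\rceil$ makes this bound tend to $0$, so by Markov's inequality, with probability $1-o(1)$ the tournament $T$ contains no transitive sub-tournament on more than $(2+\varepsilon)\log_2 n$ vertices.

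To finish, I would invoke the elementary fact that every acyclic tournament is transitive: an acyclic tournament has a topological ordering, and since any two vertices are joined by exactly one arc, that ordering forces all arcs to point in the same direction, i.e.\ the tournament is transitive. Consequently, with high probability every acyclic induced sub-tournament of $T$ has size $O(\log n)$. In any acyclic $r$-colouring each colour class induces an acyclic sub-tournament and therefore has at most $O(\log n)$ vertices, forcing $r\ge n/O(\log n)=\Omega(n/\log n)$, which completes the proof.

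There is no genuine obstacle here; the only thing requiring a bit of care is picking the constant in $k=(2+\varepsilon)\log_2 n$ so that the dominant exponent $k\log_2 n - \binom{k}{2}$ is negative and grows in absolute value, and remembering to convert ``acyclic'' to ``transitive'' before applying the counting bound.
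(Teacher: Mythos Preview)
Your argument is correct and is essentially the same first-moment/union-bound computation the paper carries out: the paper counts ordered $k$-sequences (at most $n^{k}$) and multiplies by the probability $2^{-\binom{k}{2}}$ that a fixed ordering yields a transitive tournament, taking $k\approx 3\log n$, which is exactly your $\binom{n}{k}k!\cdot 2^{-\binom{k}{2}}\le n^{k}2^{-\binom{k}{2}}$ with a slightly different constant. Your explicit remark that an acyclic tournament is automatically transitive, and the derivation of the $\Omega(n/\log n)$ coloring lower bound from it, fill in details the paper leaves implicit.
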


\begin{proof}
Selecting a sequence of $\lfloor 3\log n\rfloor$ distinct vertices $v$ of $T$
can be done in at most $s=n^{3\log n}$ ways. The probability that such a
sequence will give the ordering of a transitive tournament is at most
$\frac{1}{t}$ for $t=2^{{3\log n - 1}\choose{2}}$. The ratio
$\frac{s}{t}$ tends to 0 as $n$ goes to infinity.
\end{proof}

We now define a random model for acyclic $r$-colorable tournaments $T$ on $n$
vertices. Consider $r$ integers $s_1\geq s_2\geq \cdots\geq s_r\geq 1$ with
$s_1+s_2+\cdots + s_r=n$. To define $T$, first consider $r$ disjoint sets of
vertices $S_i$ with $|S_i|=s_i$, and impose on each $S_i$ a transitive
(acyclic) tournament. Finally, orient each edge joining vertices in two
different $S_i$ independently with probability $\frac{1}{2}$ in either
direction.

The tournaments $T$ so generated have an acyclic $r$-coloring, obtained by
assigning color $i$ to the vertices in $S_i$. We consider the problem of
acyclic $r$-coloring such a $T$ when the vertices of $T$ are given in arbitrary
order, and the $S_i$ and $s_i$
are not given. We give a deterministic algorithm that
finds such a coloring with probability arbitrarily close to 1. If $r$ is fixed,
the algorithm runs in time $O(n^2)$, linear in the size of the input $T$.

The algorithm runs in three phases, which we describe below.

The first phase starts with the tournament $T_{n_1,r_1}=T$ with $n_1=n$ and
$r_1=r$, and operates in rounds. At the beginning of the $j$th round, we have
$T_{n_j,r_j}$.

We define $d_j=c\sqrt{n_j}\log n_j$ for some constant $c$. Given a tournament
$R$ and a vertex $v$ in $R$, we define
$$d^R_{\rm diff}(v)=d^R_{\rm out}(v)-d^R_{\rm in}(v)$$
as the difference between the out-degree and the in-degree of $v$ in $R$.
Note that $E(d^{T_{n_j,r_j}}_{\rm diff}(v))=d^{S_i}_{\rm diff}(v)$ if
$v\in S_i$.
Consider the Chernoff bounds for $X$ equal to
the sum of independent Bernoulli random
veriables, with $\mu=E(X)$.
$$Pr(X\leq (1-\delta)\mu)\leq e^{-\frac{\delta^2\mu}{2}}, 0\leq\delta\leq 1,$$
$$Pr(X\geq (1+\delta)\mu)\leq e^{-\frac{\delta^2\mu}{3}}, 0\leq\delta\leq 1.$$
Letting
$$X=d^{T_{n_j,r_j}}_{\rm out}(v)-d^{S_i}_{\rm out}(v))$$
we have that
$$Pr(|X-\frac{n_j-s_i}{2}|\geq d_j)\leq 2e^{-\frac{2d_j^2}{3(n_j-s_i)}}$$

$$\leq 2e^{-\frac{2}{3}c^2\log^2 n_j}$$
and therefore
$$Pr(|d^{T_{n_j,r_j}}_{\rm diff}(v)-d^{S_i}_{\rm diff}(v)|\geq 2d_j)
\leq 2e^{-\frac{2}{3}c^2\log^2 n_j}.$$
The probability that this holds for all $v$ in $T_{n_j,r_j}$ is at most $n_j$
times the bound. Let $u^*$ be the vertex that maximizes
$|d^{T_{n_j,r_j}}_{\rm diff}(v)|$, say the quantity within the absolute value
is nonnegative. Then $u^*\in S^*=S_{i^*}$ with $s^*=|S^*|$.
Let $S_1$ be the largest of the
$S_i$ in $T_{n_j,r_j}$, and let $u_1$ be the starting vertex of the
transitive tournament $S_1$. Then
$$|S_1|-|S^*|\leq d^{S_1}_{\rm diff}(u_1)-d^{S^*}_{\rm diff}(u^*)$$
$$\leq (d^{S_1}_{\rm diff}(u_1)-d^{T_{n_j,r_j}}_{\rm diff}(u_1))
+(d^{T_{n_j,r_j}}_{\rm diff}(u_1)-d^{T_{n_j,r_j}}_{\rm diff}(u^*))
+(d^{T_{n_j,r_j}}_{\rm diff}(u^*)-d^{S^*}_{\rm diff}(u^*))$$
$$\leq 2d_j+0+2d_j=4d_j$$
with probability at least $1-2(n_j+1)e^{-\frac{2}{3}c^2\log^2 n_j}$.

The algorithm seeks to determine $S^*$ given the vertex $u^*\in S^*$.
For $v\neq u^*$, and $w\neq u^*,v$, the probability
that $u^*vw$ is a directed 3-cycle in either direction is $\frac{1}{4}$,
unless $v,w\in S^*$, in which case the probability is zero.
Let $X(v)$ be the number of $w$s that form such a directed 3-cycle
with $u^*$ and $v$. Then $E(X(v))=\frac{1}{4}(n-2)$ if $v\notin S^*$,
and $E(X(v))=\frac{1}{4}(n-s^*)$ if $v\in S^*$. Then
$$Pr(|X(v)-E(X(v))|\geq d_j)\leq 2n_je^{-\frac{4d_j^2}{3n_j}}$$
$$\leq 2n_je^{-\frac{4}{3}c^2\log^2 n_j},$$
where the factor of $n_j$ is needed to account for the fact that $u^*$ could
be any vertex in $T_{n_j,r_j}$.
With probability $n_j$ times this much this holds for all $X(v)$.

Suppose $s^*-2>2d_j$.
With probability at least $1-4n_j^2e^{-\frac{4}{3}c^2\log^2 n_j},$
vertices $v\in S^*$ have $X(v)\leq A'=\frac{1}{4}(n-s^*)+d_j$, and
vertices $v\notin S^*$ have $X(v)\geq B'= \frac{1}{4}(n-2)-d_j$
and $B'>A'$.

Thus for as long as $s_1>6 d_j+2$, we have $s^*>2 d_j+2$, and
and the algorithm can determine $u^*$ and test the $A',B'$ bounds to determine
$S^*$. This works with probability at least
$$1-p_j=1-2(n_j+1)e^{-\frac{2}{3}c^2\log^2 n_j}
-4n_j^2e^{-\frac{4}{3}c^2\log^2 n_j}$$
$$=1-e^{-(1-\epsilon_j)\frac{2}{3}(c^2\log^2 n_j)}$$
where $\epsilon_j\leq \frac{3}{c^2\log n_j}$.

\begin{lemma}
Suppose the $j$th round starts with $T_{n_j,r_j}$.
Let  $d_j=c\sqrt{n_j}\log n_j$.
With probability at least $1-p_j$,
if the largest $S_1$ has $s_1>6d_j+2$, then we find $S^*$ with
$s^*\geq s_1-4d_j>2d_j+2$, and reduce the problem to
$T_{n_{j+1},r_{j+1}}=T_{n_j-s^*,r_j-1}$.
\end{lemma}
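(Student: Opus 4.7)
The plan is to assemble the three concentration estimates already built up in the paragraphs preceding the statement, and to close the argument with a single union bound. In essence nothing conceptually new is required beyond what the discussion already contains; the proof is a synthesis and bookkeeping exercise.

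First I would apply the Chernoff bound to $X = d^{T_{n_j,r_j}}_{\rm out}(v) - d^{S_i}_{\rm out}(v)$ for each vertex $v \in S_i$, using that edges from $v$ to vertices outside $S_i$ are independent fair coins while edges inside $S_i$ are deterministic. This yields the pointwise estimate $|d^{T_{n_j,r_j}}_{\rm diff}(v) - d^{S_i}_{\rm diff}(v)| \le 2d_j$ with per-vertex failure probability at most $2e^{-\frac{2}{3}c^2\log^2 n_j}$, and hence global failure at most $2n_j$ times that after a union bound over all $v$.

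Next, let $u^*$ maximize $|d^{T_{n_j,r_j}}_{\rm diff}(\cdot)|$, assume without loss of generality that the maximum is attained with nonnegative sign (otherwise reverse all orientations of $T$), so that $u^*$ is the initial vertex of the transitive order on $S^*=S_{i^*}$. Let $u_1$ be the initial vertex of the transitive order on the largest class $S_1$. Since $d^{S_1}_{\rm diff}(u_1)=s_1-1$ and $d^{S^*}_{\rm diff}(u^*)=s^*-1$, the displayed telescoping decomposition in the text, together with the pointwise estimate (used once at $u_1$ and once at $u^*$) and the maximality inequality $d^{T_{n_j,r_j}}_{\rm diff}(u_1)\le d^{T_{n_j,r_j}}_{\rm diff}(u^*)$ (which makes the middle term nonpositive), gives $s_1-s^* \le 4d_j$. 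The hypothesis $s_1>6d_j+2$ then forces $s^*>2d_j+2$, and in particular $u^*\in S^*$.

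To recover $S^*$ from $u^*$, for each $v\neq u^*$ I would count the number $X(v)$ of $w$ such that $\{u^*,v,w\}$ spans a directed $3$-cycle. A short case analysis shows $E(X(v))=\frac{1}{4}(n_j-s^*)$ if $v\in S^*$ and $\frac{1}{4}(n_j-2)$ otherwise, because cross-class edges are independent fair coins while edges inside $S^*$ are forced by the transitive order so cannot contribute. A second Chernoff bound gives $|X(v)-E(X(v))|\le d_j$ with combined failure probability at most $4n_j^2 e^{-\frac{4}{3}c^2\log^2 n_j}$ after a union bound over $v$ and the $n_j$ possible identities of $u^*$. Provided the constant $c$ is chosen so that $s^*>2d_j+2$ suffices to give $B'=\frac{1}{4}(n_j-2)-d_j > A'=\frac{1}{4}(n_j-s^*)+d_j$, thresholding on $X(v)$ exactly separates $S^*$ from its complement. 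Summing the two failure probabilities yields the claimed $1-p_j$ bound, and removing $S^*$ produces $T_{n_{j+1},r_{j+1}}=T_{n_j-s^*,r_j-1}$, closing the round.

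The main obstacle I anticipate is not conceptual but the bookkeeping of constants: the threshold $s_1>6d_j+2$ must simultaneously guarantee (i) the concentration event for the degree differences, (ii) the gap $s^*\ge s_1-4d_j$, and (iii) the $3$-cycle separation $B'>A'$. These are all single linear-in-$d_j$ conditions, and taking $c$ large enough absorbs any slack between the deviation constants in the three Chernoff applications, so the final step is a routine calculation.
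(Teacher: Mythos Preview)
Your proposal is correct and mirrors the paper's own argument essentially line for line: the lemma in the paper is stated without a separate proof precisely because it summarizes the preceding Chernoff-and-union-bound computation, and you have reproduced that computation faithfully (the degree-difference concentration to locate $u^*$ and bound $s_1-s^*\le 4d_j$, then the $3$-cycle count $X(v)$ to threshold out $S^*$). One small slip: you assert that $u^*$ is the \emph{initial} vertex of $S^*$, which is not guaranteed by maximizing $|d^{T}_{\rm diff}|$, but this is harmless since $d^{S^*}_{\rm diff}(u^*)\le s^*-1$ already gives the needed inequality $s_1-s^*\le d^{S_1}_{\rm diff}(u_1)-d^{S^*}_{\rm diff}(u^*)$.
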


The last round of the first phase takes
$T_{n_j,r_j}$ to $T_{n_{j+1},r_{j+1}}$.
We let $n^-=n_j, r^-=r_j, n'=n_{j+1}, r'=r_{j+1},$ and $\hat{j}=j$ for this
$j$. Note that after phase one is over, we have
$s'_1\leq 6d^+2\leq 2+6c\sqrt{n'}\log n'$ with $s'_1$ and $d'$ defined
similarly with $j=\hat{j}$.
When phase one no longer applies,
$r'\geq\frac{n'}{s'_1}\geq \frac{\sqrt{n'}}{7c \log n'}$.

In particular, if $r=O(1)$, then $r'=O(1)$ and $n'=O(1)$
and the rest of the problem can be solved in $O(1)$
time.

We may assume $n^->\max(\frac{\log n}{\log r},\frac{\log n}{\log\log n})$
since otherwise the problem can be solved in linear time 
avoiding the $\hat{j}$th round.

\begin{theorem}
Finding $S^*$ takes $O(n^2)$ time, linear in the size of the input $T$. 
This yields a total running time of $O(rn^2)$ over at most $r$ rounds.
The probability bound for the first phase is $1-n_j p_j=1-e^{-\Omega(c^2\log^2 n^-)}$
with $j=\hat{j}$.

For $r$ constant, we can find an acyclic $r$-coloring of $T$
on $n$ vertices in time $O(n^2)$, linear in the size of the input,
with probability as above.
\end{theorem}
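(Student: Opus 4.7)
The plan is to verify each of the theorem's three assertions in turn: the per-round cost of $O(n^2)$, the aggregate cost of $O(rn^2)$ together with the stated success probability, and finally the reduction to a constant-size residual when $r=O(1)$.

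For the per-round cost I would argue directly from the description of a round. Computing $d^{T_{n_j,r_j}}_{\rm out}(v)$ and $d^{T_{n_j,r_j}}_{\rm in}(v)$ for every vertex, and hence $d^{T_{n_j,r_j}}_{\rm diff}(v)$, takes $O(n_j^2)$ time by scanning the adjacency representation. Once $u^*$ is identified by a single linear pass, computing $X(v)$ for each $v\neq u^*$ amounts to counting the $w$ such that $u^*vw$ is a directed 3-cycle; this is $O(n_j)$ per $v$, and $O(n_j^2)$ in total. The thresholds $A'$ and $B'$ are then immediate, and $S^*$ is extracted by one more linear scan. Each round thus runs in $O(n_j^2)\leq O(n^2)$ time, and since it peels off a full color class $S^*$ (strictly decreasing $r_j$), phase one terminates in at most $r$ rounds, for an aggregate cost of $O(rn^2)$.

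For the success probability I would take the union of the bad events across all rounds of phase one. In round $j$ the failure probability is bounded by $p_j$ via the lemma preceding the theorem; since $n_j\leq n^-$ and there are at most $r\leq n^-$ rounds, the union bound gives an overall failure probability of at most $n^- p_{\hat j}$. Substituting the expression for $p_{\hat j}$ and using $\log n_j\leq \log n^-$ absorbs the polynomial prefactor into the dominant exponential, yielding the claimed $e^{-\Omega(c^2\log^2 n^-)}$.

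For the final assertion, when $r$ is a constant the termination condition $r'\geq \sqrt{n'}/(7c\log n')$ together with $r'\leq r=O(1)$ forces $\sqrt{n'}=O(\log n')$, which in turn pins $n'$ to a constant. The residual tournament $T_{n',r'}$ can then be colored by brute force over all $r^{n'}=O(1)$ color assignments, each checkable in $O(1)$ time; the residual phase therefore contributes only $O(1)$ and the total running time is dominated by phase one. The main obstacle I anticipate is the bookkeeping required to collapse the per-round Chernoff tails into the clean exponent stated in the theorem; this amounts to observing that the $\mathrm{poly}(n^-)$ prefactors arising both from taking maxima over vertices within a round and from unioning over rounds are absorbed by the $\log^2 n^-$ exponent once $c$ is chosen sufficiently large, so the final bound is insensitive to these loose polynomial factors.
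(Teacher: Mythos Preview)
The paper states this theorem without proof, as a summary of the lemma and discussion immediately preceding it; your argument correctly spells out that implicit reasoning---the $O(n_j^2)$ cost for degree differences and triangle counts, at most $r$ rounds, a union bound dominated by the final round $j=\hat j$, and the observation (stated verbatim in the paper just before the theorem) that $r'\le r=O(1)$ combined with $r'\ge\sqrt{n'}/(7c\log n')$ forces $n'=O(1)$.

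Two small corrections. First, you wrote ``$n_j\le n^-$'' where you need $n_j\ge n^-$ for $j\le\hat j$ (the $n_j$ are decreasing), which is what gives $p_j\le p_{\hat j}$. Second, the bound ``$r\le n^-$'' on the number of rounds is not justified; the paper simply writes the prefactor as $n_{\hat j}=n^-$, and since $\log n^-=o(\log^2 n^-)$ this factor is absorbed into the exponent regardless, so your closing remark about polynomial prefactors being swallowed by the $\log^2 n^-$ term is the right way to finish---just avoid asserting the unneeded inequality.
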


After the first phase is over, we have $T'=T_{n',r'}$.
The second phase first identifies all sets $U$ in $T'$
with $|U|\leq c\log n'$ that could be the least elements 
of an $S_i$. The number of possible such sets $U$ is 
at most ${n'}^{c\log n'}$.

First $U$ must be a transitive tournament. Suppose
$U$ of size $c\log n'$
is the bottom of an $S_i$. Let $V$ be
$U$ plus all the elements above all of $U$.

We claim that $|V|\setminus S_i|<c\log n'$. Otherwise
choose $W$ with $|W|=c\log n'$ contained in $V\setminus S_i$.
The sets $U, W$ are joined by edges joining different
colors, thus the probability that they will all be oriented
upwards is at most $2^{-c^2\log^2 n'}$. There are at most
${n'}^{2c\log n'}$ possible choices of $V_{\ell},W$, so
with probability at least $1-e^{(2c-c^2\log 2)\log^2 n'}$
the claim follows.

Suppose $k_0\geq c_0\log_2 n'$ for ssome sufficiently large constant $c_0$.
The algorithm repeatedly chooses sets $Z=V\setminus W$ that
define transitive (acyclic) tournaments within $T_{n',r'}$.
The sets $Z$ are considered in nonincresasing order of $z=|Z|\geq k_0$.
We claim that with high probability,
we will have $Z=S_i$ for one of the sets $S_i$,
so we choose such a $Z$ and discard all later $Z'$ that intersect $Z$,
with $|Z'|\leq |Z|$. The algorithm ends when there are no more $Z$
with $z\geq k_0$.

\begin{theorem}
The second phase correctly selects the remaining $S_i$ with $s_i\geq k_0$,
with probability at least $1-e^{(2c-c^2\log 2)\log^2 n'}-2^{-\frac{k_0}{8}+1}$.
for $k_0\geq\geq 24\log n'$ and $n'$ sufficiently large.
The running time is bounded by $O(e^{3c\log^2 n'})$.
\end{theorem}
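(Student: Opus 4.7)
My plan is to combine two probabilistic guarantees with a greedy selection argument. After Phase~1, the surviving tournament $T'=T_{n',r'}$ partitions into transitive blocks $S_1,\dots,S_{r'}$, with edges inside each $S_i$ fixed and edges between distinct blocks being independent fair coins. The algorithm enumerates every candidate bottom set $U$ of size $\lceil c\log n'\rceil$ that induces a transitive sub-tournament, forms $V(U)$ consisting of $U$ together with every vertex above every element of $U$, and then tries to choose $W\subseteq V(U)$ of size less than $c\log n'$ so that $Z=V(U)\setminus W$ is a transitive tournament of size at least $k_0$. The number of pairs $(U,W)$ is bounded by $(n')^{2c\log n'}=e^{2c\log^2 n'}$; with polynomial work per pair, the total running time is $O(e^{3c\log^2 n'})$, matching the stated bound.

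The first guarantee is that whenever $U$ coincides with the bottom $\lceil c\log n'\rceil$ elements of some block $S_i$ with $s_i\geq k_0$, we have $|V(U)\setminus S_i|<c\log n'$. Otherwise there would exist $W\subseteq V(U)\setminus S_i$ of size $\lceil c\log n'\rceil$; the $c^2\log^2 n'$ inter-block edges from $U$ to $W$ are independent fair coins that would all have to point upward, an event of probability $2^{-c^2\log^2 n'}$. A union bound over the at most $(n')^{2c\log n'}$ candidate pairs $(U,W)$ gives failure probability at most $e^{(2c-c^2\log 2)\log^2 n'}$, which is the first term of the claimed bound. This reuses the argument already sketched in the paragraph preceding the theorem.

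The second, and main, obstacle is to rule out ``impostor'' transitive sets $Z'$ of size at least $k_0$ that span two or more blocks. If $Z'$ meets $S_j$ in a chain of size $z_j$, then transitivity of $Z'$ forces the existence of a linear order on $Z'$ extending each internal chain order and agreeing with all $\sum_{j<j'} z_j z_{j'}\geq \binom{k_0}{2}/2$ random inter-block edges. The number of such linear extensions is at most $k_0!$, and for each the inter-block coins must agree with it, so the probability $Z'$ is transitive is at most $k_0!\cdot 2^{-\binom{k_0}{2}/2}$. A union bound over the at most $(n')^{k_0}$ choices of $Z'$, together with the hypothesis $k_0\geq 24\log n'$ and a routine Stirling estimate, reduces this to at most $2^{-k_0/8+1}$, which is the second error term.

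Granted both guarantees, correctness of the greedy phase is immediate. Every true block $S_i$ with $s_i\geq k_0$ appears as some $Z=V(U)\setminus W$ (take $U$ to be the bottom $\lceil c\log n'\rceil$ elements of $S_i$ and $W=V(U)\setminus S_i$), and every transitive $Z$ of size at least $k_0$ that the algorithm could select is entirely contained in a single block. Processing candidates in non-increasing order of size and discarding later candidates intersecting an already-chosen $Z$ therefore recovers each $S_i$ with $s_i\geq k_0$ exactly, and the algorithm succeeds with probability at least $1-e^{(2c-c^2\log 2)\log^2 n'}-2^{-k_0/8+1}$, as required.
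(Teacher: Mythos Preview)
Your handling of the first error term and the running time is essentially the same as the paper's, and is fine. The real problem is in your ``impostor'' argument for the second error term.

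The inequality $\sum_{j<j'} z_j z_{j'}\geq \binom{k_0}{2}/2$ is simply false. Take any $Z'$ of size $z\geq k_0$ that has $z-1$ vertices in one block $S_i$ and a single stray vertex in another block; then $\sum_{j<j'}z_jz_{j'}=z-1$, which is linear in $z$, not quadratic. In that regime the number of linear extensions is only $z$ and the probability that $Z'$ is transitive is about $z\,2^{-(z-1)}$, so your bound $k_0!\cdot 2^{-\binom{k_0}{2}/2}$ is far too optimistic. Moreover, your union bound ``at most $(n')^{k_0}$ choices of $Z'$'' does not cover all sizes $z\geq k_0$ and, more importantly, a crude union bound over all $Z'$ cannot work here: when $\ell:=z-\max_j z_j$ is small, both the probability of transitivity and the \emph{number of relevant impostors} have to be controlled in terms of $\ell$ simultaneously.

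This is exactly what the paper does. It reorders the $z_j$ decreasingly, finds a median index $\hat{\imath}$ with $\sum_{i<\hat{\imath}}z_i<z/2$ and $\sum_{i>\hat{\imath}}z_i<z/2$, sets $\ell=z-z_{\hat{\imath}}$, and splits $Z$ into parts $A,B$ with $|A|\geq z/2$, $|B|\geq \ell/2$, and all $A$--$B$ edges random. The probability that a fixed linear order on $A$ is extended by all of $B$ is at most $\bigl((z/2)2^{-z/2}\bigr)^{\ell/2}$. For the union bound, the paper uses a key consequence of the greedy ``non-increasing size'' rule that you did not invoke: since each true $S_i$ is itself a candidate, any impostor $Z$ still under consideration must satisfy $z\geq s_1$ for the largest block it meets, so $|S_1\setminus Z_1|\leq \ell$. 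Hence the impostor is determined by at most $2\ell$ vertex choices (the $\ell$ vertices of $Z\setminus Z_1$ and the $\leq\ell$ vertices of $S_1\setminus Z_1$), giving only $n^{2\ell}$ configurations. Multiplying and using $z\geq k_0\geq 24\log n'$ yields the bound $2^{-\ell k_0/8}$, and summing over $\ell\geq 1$ gives $2^{-k_0/8+1}$. Your argument would need exactly this $\ell$-parametrized count to go through; as written, the step where you assert the quadratic lower bound on cross-edges is the point where the proof breaks.
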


\begin{proof}
It only remains to show that all chosen $Z$ are $S_i$, with probability
at least $1-2^{-\frac{k_0}{8}+1}$. If not, $Z$ meets at least two $S_i$. Let
$Z=\bigcup_i Z_i$, where $Z_i=Z\cap S_i$, with $z_i=|Z_i|$.
Order the $Z_i$ in decresasing order of $z_i$, and let $\hat{i}$ be such that
$\sum_{i<\hat{i}} z_i<\frac{z}{2}$ and $\sum_{i>\hat{i}} z_i<\frac{z}{2}$.

Let $\ell=z-z_{\hat{i}}$.
The $z_i$ can be partitioned into two sets
into one of the two cases
$A=\{z_1,\ldots,z_{\hat{i}}\},B=\{z_{\hat{i}+1},\ldots,z_t\}$ or
$A=\{z_1,\ldots,z_{\hat{i}-1}\},B=\{z_{\hat{i}},\ldots,z_t\}$,
and one of these two
partitions has corresponding sizes at least $\frac{z}{2},\frac{\ell}{2}$.

Once the edges within $A$ with $|A|\geq\frac{z}{2}$
have been oriented, the probability that a vertex
$w$ in $B$ with $|B|\geq\frac{\ell}{2}$ will fit in some order among $A$ is
$(|A|+1)2^{-|A|}=(\frac{z}{2})2^{-\frac{z}{2}},$ or
${((\frac{z}{2})2^{-(\frac{z}{2})})}^{\frac{\ell}{2}}$ over $B$.
The $\ell$ vertices in $Z-Z_1$ and at most $\ell$ vertices in $S_1-Z_1$
(since $s_1\leq z$) can be chosen in at most $n^{2\ell}$ ways, giving the
probability bound
$$n^{2\ell}{((\frac{z}{2})2^{-(\frac{z}{2})})}^{\frac{\ell}{2}}$$
$$= 2^{-\ell{(-2\log_2 n -\frac{\log_2 \frac{z}{2}}{2}+\frac{z}{4})}} $$
$$= 2^{-\frac{\ell z}{4}
(1-(\log_2 e)(8\frac{\log n}{z}+\frac{2\log\frac{z}{2}}{z}))}$$
$$\leq 2^{-\frac{\ell z}{8}}$$
$$\leq 2^{-\frac{\ell k_0}{8}}$$
for $z\geq k_0\geq 24\log n'$ and $n'$ sufficiently large.
Summing over all $\ell\geq 1$ gives the bound
$\leq 2^{-\frac{k_0}{8}+1}$.
\end{proof}

The third phase starts with a resultant $T_{n'',r''}$ and possible
sets $Z$ with $z<24\log n'$, so $r''\geq\frac{n''}{24\log n'}$.
Each $S_i$ has $e^{2c\log^2 n'}$ choices of $U,W$, for a total of
$e^{2cr''\log^2 n'}$ choices for the $r''$ sets $S_i$ to be selected,
giving running time $O(n''e^{2cr''\log^2 n'})$.

The third phase seems the most expensive, since the running time
is exponential in $r$ versus quasi-polynomial in the second phase,
and polynomial in the first phase. We can avoid the third phase by
approximating the bound $24\log n'$ on color classes with the bound
$\log_2 n$ from Theorem~\ref{short}, for an approximation factor
of $24\log 2$.

\end{document}